\documentclass[journal]{IEEEtran}

\usepackage{algorithm}
\usepackage{algorithmic}
\usepackage{amsmath}
\usepackage{amssymb}
\usepackage{latexsym}
\usepackage{multirow}
\usepackage{epsfig}
\usepackage{graphics}
\usepackage{graphicx}
\usepackage{mathrsfs}
\usepackage{subfigure}
\usepackage{slashbox}
\usepackage{mathrsfs}
\usepackage{pifont}
\usepackage{bbding}
\usepackage{stmaryrd}
\usepackage{amssymb}
\usepackage{amsfonts}
\usepackage{epic}
\usepackage{curves}
\usepackage{stfloats}
\usepackage{epstopdf}
\usepackage{cite}
\usepackage{stfloats}

\newcommand{\beq}{\begin{equation}}
\newcommand{\enq}{\end{equation}}
\newcommand{\ben}{\begin{eqnarray}}
\newcommand{\enn}{\end{eqnarray}}
\newcommand{\bei}{\begin{itemize}}
\newcommand{\eni}{\end{itemize}}

\newcommand{\bm}[1]{\mbox{\boldmath{$#1$}}}

\newtheorem{theorem}{Theorem}
\newtheorem{proposition}{Proposition}

\newtheorem{lemma}{Lemma}

\newtheorem{remark}{Remark}

\makeatletter
\newcommand{\figcaption}{\def\@captype{figure}\caption}
\newcommand{\tabcaption}{\def\@captype{table}\caption}
\makeatother
\date{}

\begin{titlepage}

\title{
\vspace{-1.0cm}
   \hfill{\em\small{}}\\
   \vspace{0.6cm} \LARGE
\begin{center}
{Robust Joint Source-Relay-Destination Design Under Per-antenna Power Constraints}\end{center}
}

\author{Hongying~Tang, Wen~Chen,~\IEEEmembership{Senior Member,~IEEE}, and Jun Li, ~\IEEEmembership{Member,~IEEE}
\thanks{Copyright (c) 2015 IEEE. Personal use of this material is permitted. However, permission to use this material for any other purposes must be obtained from the IEEE by sending a request to pubs-permissions@ieee.org.}
\thanks{Hongying~Tang, and Wen~Chen  are with the Department of Electronic Engineering,
Shanghai Jiaotong University, Shanghai, 200240
PRC, and the School of Electronic Engineering and Automation, Guilin University of Electronic Technology, Guilin, China. (e-mail: \{lojordan, wenchen\}@sjtu.edu.cn).  Jun Li is with school of Electrical and Information Engineering, University
of Sydney, Australia (e-mail: jun.li1@sydney.edu.au). }
\thanks{This work is supported by the National 973 Project \#2012CB316106,
by NSF China \#61328101, by STCSM Science and
Technology Innovation Program \#13510711200, and by SEU National Key
Lab on Mobile Communications \#2013D11.}}

\markboth{submitted to IEEE xxxxxxx, VOL.~x, No.~x, xxxxxx~2011} {Shell \MakeLowercase{\textit{Tang et al.}}:
Robust Joint Source-Relay-Destination Design Under Per-antenna Power Constraints}

\end{titlepage}

\begin{document}

\maketitle

\begin{abstract}
This paper deals with joint source-relay-destination beamforming (BF) design for an amplify-and-forward (AF) relay network. Considering the channel state information (CSI) from the relay to the destination is imperfect, we first aim to  maximize the worst case received SNR under per-antenna power constraints. The associated optimization problem is then solved in two steps. In the first step, by revealing the rank-one property of the optimal relay BF matrix, we establish the  semi-closed form solution of the joint optimal BF design that only depends on a vector variable. Based on this result, in the second step, we propose a low-complexity iterative algorithm  to obtain the remaining unknown variable. We also study the problem for minimizing the maximum per-antenna power at the relay while ensuring a received signal-to-noise ratio (SNR) target, and  show that it reduces to the SNR maximization problem. Thus the same methods can be applied to solve it.  The differences between our
result and the existing related work are also discussed in details. In particular, we show that in the perfect CSI case, our algorithm has the same performance but with much lower cost of computational complexity than the existing method. Finally, in the simulation part, we investigate the impact of imperfect CSI on the system performance to verify our analysis.
\end{abstract}
\begin{IEEEkeywords}
Amplify-and-forward, multi-antenna relay system, per-antenna power, beamforming
\end{IEEEkeywords}
\IEEEpeerreviewmaketitle


\section{Introduction}\label{sec:1}
Relay communications have been studied extensively over the past decades  as a means of extending the coverage of wireless network and improving the spatial diversity of the system. Transmission schemes at the relay can be categorized into several groups, i.e., the Amplify-and-Forward (AF) scheme, the Decode-and-Forward scheme \cite{liu} etc. Among them, the AF scheme is the most simple scheme, as the relay only performs linear processing on the received signal and re-transmits it to the destination. AF scheme has been efficiently used to exploit the benefit of relaying in the multiple access relay channels \cite{wan}, the broadcast relay channels \cite{wan_broadcast}, and the two-hop relay channels \cite{BK_grass, AF-BF, MD_Unicast, ZW-wnt}.

Performing joint source-relay linear beamforming (BF) can achieve higher data rate~\cite{wan, wan_broadcast}.
For the relay BF design, many previous works concentrate on the sum power constraints. It is shown in \cite{BK_grass, AF-BF} that for the single user with  single relay case,
the optimal relay BF matrix $\mathbf W$ can be seen as the combination of a maximum-ratio-combining (MRC) equalizer $\mathbf w_r$ and a maximum-ratio-transmission (MRT) equalizer $\mathbf w_t$, or i.e, $\mathbf W=\mathbf w_r\mathbf w_t^H$. By using this BF matrix, the received SNR can be maximized.
However, this optimal design matrix would result in different elemental  power allocations on each antenna, which is undesirable from the power amplifier design perspective \cite{XZ_MIMO_transmit_BF}.  Considering the fact that each antenna usually
uses the same type of power amplifier and consequently has the same power dynamic range and peak power, this would  bring some difficulties on the power amplifier designs.
For  the relay node (usually low cost, with low-profile power amplifier) in a wireless network, setting a peak power threshold under per-antenna power control can successfully solve this problem and  thus relax the  power amplifier design effort. This is the motivation that we explore the per-antenna power constraints in a relay network in this paper.

Several papers have considered the per-antenna power constraints in different problem setups. In a MISO channel discussed by \cite{MV_miso}, a closed-form solution was derived for transmit BF design. A novel transceiver design under mixed power constraints (including the sum power constraints as well as per-antenna power constraints)  for MIMO systems was investigated in \cite{CX_2015_mix}, where the authors proposed their analytical  method by exploiting the hidden physical meaning of the problem. For the multiuser downlink channel \cite{WY_Duality}, the authors proposed a framework of efficient optimization technique for determining the downlink BF vector via a dual uplink problem. This uplink-downlink duality was recently extended into the relay system by \cite{MD_Unicast}, where a semi-closed form solution for the optimal relay BF matrix that depends on a set of dual variables was derived.  However, this numerical result cannot provide any insight in designing the relay BF matrix and the complexity is also very high.

In general, perfect CSI is usually hard to obtain, due to many factors such as inaccurate channel estimation, channel quantization and feedback delay. Since the performance of a relay system is sensitive to the accuracy of available CSI, robust design taking channel uncertainty into account has attracted much attention.  Generally, there are two widely used CSI uncertainty  model in the literatures: the stochastic model and the deterministic model.
For the statistical CSI uncertainty model, which assumes the distribution of the CSI to be known and seeks to enhance the average system performance, the optimal relay precoder design was obtained in \cite{CX_2010_AF}. Later, this work was extended to the multi-hop relay channel in  \cite{CX_2013_TSP_General,CX_2012_JSAC_THPrecoding}.  In contrast, the deterministic CSI uncertainty model, assumes that the instantaneous value of CSI error is norm-bounded, and aims to yield worst-case guarantees. Under this model,
references \cite{HS_Worst, thywork} studied the corresponding robust problem in \cite{BK_grass, AF-BF}, respectively, and obtained the optimal solution.  Unfortunately, the extension from the aforementioned works to the relay BF design under per-antenna power constraints is not straightforward, and the existing designs are not applicable any more. To the best of our knowledge, the robust joint source-relay-destination BF design has not been studied
in the existing literatures, and even in the perfect CSI case, the optimal solution of this problem is not yet known.

In this paper, we consider the AF-relay networks with a single source-destination pair and a single relay, and address the joint source-relay-destination BF design problem under deterministic imperfect CSI model. With per-antenna power constraints at the relay node, we adopt two widely used performance metrics,  the maximization of the received SNR and the minimization of per-antenna power at the relay with a given required SNR.  The main contributions of this paper are as follows:
\begin{itemize}
\item
We first aim to maximize the received SNR from the worst case perspective. By fixing the source
BF vector, we determine the rank one property of the optimal relay BF matrix, revealing that it can be decomposed as the
combination of an MRC equalier and an equalizer $\mathbf w$ to be determined. Therefore, the original complicated
matrix-valued problem has been converted into a much simpler problem with variable $\mathbf w$.
Based on the above results, the optimal source BF vector is derived,  and the destination BF vector is given as a  function of $\mathbf w$. We further propose a low-complexity iterative
algorithm for solving $\mathbf w$. Simulation results show that our robust design can significantly reduce the sensitivity
of the channel uncertainty to the system performance.
\item
Then we consider minimizing  per-antenna power at the relay node under a given received SNR target. We prove that it can also be transformed into the SNR maximization problem  and thus can be solved by the same method.
\item The differences between our
result and the existing related work are also discussed in details. In particular, we show that in the perfect CSI case, our method has the same performance but with significantly lower complexity than that in \cite{MD_Unicast}.
\end{itemize}

Although only single data stream is discussed in this paper, our results can still  provide some useful insights to the more general case when transmiting multiple data steams.  Moreover, for the fifth generation (5G) cellular network, where  millimeter-wave
(mmWave) bands are used, the conventional microwave architecture where
every antenna is connected to a high-rate ADC/DAC is unlikely to be applicable anymore \cite{FB_5G}.  Therefore lower order MIMO is preferred in this case. We believe that our result can have more applications in the next generation communication systems.

\emph{Notations}: $[\cdot]^H$ denotes the conjugated transpose of a matrix or a vector. $\mathbb C^n$ denotes the $n$ dimensional complex field. We will use boldface lowercase letters to denote column vectors and  boldface uppercase letters to denote matrices.
$\mathbf 0$ denotes a vector or matrix with all zeros entries.
  $||\cdot||_2$ denotes
   the Euclidean norm  of a vector and $||\cdot||_F$ denotes the Frobenius norm of a matrix.  $\mathbf X\succeq \mathbf 0$ means that the matrix $\mathbf X$ is symmetric positive semidefinite.  $tr(\cdot)$ is the trace of a matrix. $\mathcal{E}(\cdot)$ is the expectation of a random variable.
$\lambda_{\max}(\cdot)$ and ${\bm \nu}(\cdot)$ stand for the largest eigenvalue and the corresponding eigenvector
 of a matrix, respectively.

\section{PROBLEM STATEMENT}\label{sec:2}
In this section, we will introduce the system model, the channel uncertainty and the problem formulation.

\subsection{System Model}

We consider  a two-hop AF multiantenna relay network as shown in Fig.~\ref{rank}, where the source and destination are equipped with $M_s$ and
$M_d$ antennas respectively, and the relay is equipped with $N$ antennas.
\begin{figure}[htp]
    \centering
    \includegraphics[width=3.5in]{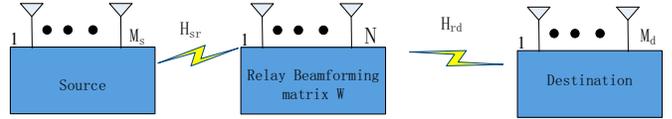}
    \caption{A two-hop multi-antenna relay network.}\label{rank}
\end{figure}
The direct link is not taken into account due to large scale fading.
The signal transmission is completed through two hops. In the first hop, the source transmits a symbol $s$ with unit power,
and the signal received by the relay is given by
\beq
\mathbf y_r=\sqrt{P_s}\mathbf H_{sr}\mathbf bs+\mathbf n_{r},\nonumber
\enq
where $\mathbf b$ is an $M_s\times 1$ transmit beamforming (BF) vector with unit norm,  $\mathbf H_{sr}\in \mathbb C^{N\times M_s}$ denotes the first hop channel matrix from the source to the relay,  $P_s$ is the given transmit power at the source, and $\mathbf n_{r}$ denotes the $N$ dimensional complex  additive white Gaussian noise (AWGN) vector with variance matrix $\sigma^2_r\mathbf I_N$ at  the relay.
By the AF strategy, the signal forwarded by the relay is
\beq
\mathbf q=\mathbf W\mathbf y_r,\nonumber
\enq
where $\mathbf W\in \mathbb C^{N\times N}$ is the linear BF matrix of the relay. Then the transmit power at each antenna of the relay is given by
\ben
\mathcal E\{|[\mathbf W\mathbf y_r]_i|^2\}=[P_s\mathbf W\mathbf H_{sr}\mathbf b\mathbf b^H\mathbf H_{sr}^H\mathbf W^H+\sigma_r^2\mathbf W\mathbf W^H]_{i,i},\nonumber
\enn
where $[\mathbf a]_i$ denotes the $i$th element of the vector $\mathbf a$, and $[\mathbf A]_{i,i}$ denotes the $i$th diagonal entry of the matrix $\mathbf A$.
The received signal at the destination node can be expressed as
\ben\label{dsignal}
y_d&=&\sqrt{P_s}\mathbf r^H\mathbf H_{rd}\mathbf W\mathbf H_{sr}\mathbf bs+\mathbf r^H\mathbf H_{rd}\mathbf W\mathbf n_r+\mathbf r^H\mathbf n_d,\label{receive}\nonumber
\enn
where $\mathbf H_{rd}\in \mathbb C^{M_d\times N}$ denotes the second hop channel matrix from the relay to the destination, $\mathbf r$ is an $M_d\times 1$ receive BF vector with unit norm, and $\mathbf n_d$ is an AWGN vector observed at the destination with variance matrix $\sigma^2_d\mathbf I_{M_d}$.
The received signal-to-noise ratio (SNR) at the destination is then given by
\ben\label{equ:backsnr}
\text{SNR}=\frac{P_s|\mathbf r^H\mathbf H_{rd}\mathbf W\mathbf H_{sr}\mathbf b|^2}{\sigma_r^2\|\mathbf r^H\mathbf H_{rd}\mathbf W\|_2^2+\sigma_d^2}.\nonumber
\enn

\subsection{Channel Uncertainty Model}\label{sec:uncertainty}
In a practical wireless communication system, with only imperfect channel state information at the transmitter side (CSIT), the system performance will be deteriorated. This motivates us to investigate the robust design which takes the channel state information (CSI) errors into account.

In this paper, we assume that the CSI in the second hop varies much faster than that of the first hop, for example, the positions of the source (e.g. a base station) and the relay are fixed, and the destination is moving (e.g. a mobile terminal).  Then the CSI feedback from the destination to the relay is usually outdated and the CSI error must be considered \cite{CJ_PartialCSI}. On the other hand, the hop between the source and the relay undergoes slow fading channel due to their fixed positions. When the relay transmits signals with pilots in the second time slot, it is possible for the source to estimate the first hop CSI nearly perfectly via the reciprocal channel,  if the training SNR is high \cite{VHN_twoway}. Hence the channel error in the first hop can be neglected. Based on the above assumptions, we only consider the CSIT uncertainty in the second hop.
The authors in \cite{CK_MulPoint} also used this model for exploiting the situation when the relays are located closer to the source than to the destination, while this assumption is reasonable because of the high signal quality between the source and the relays. This model has also been widely used in the literatures such as \cite{CJ_PartialCSI, HS_Worst, nonrob, ganzhengtsp, thywork, Tao, ZW-wnt}.

Define the channel error matrix $\Delta \mathbf H$ as the difference between the actual channel $\mathbf H_{rd}$ and the available channel $\tilde {\mathbf H}_{rd}\triangleq[\tilde {\mathbf h}_1, \cdots, \tilde {\mathbf h}_N]$, i.e., $\Delta \mathbf H=\mathbf H_{rd}-\tilde {\mathbf H}_{rd}$. Then under the deterministic channel uncertainty model, $\Delta \mathbf H$ can be described as
\ben\label{equ:errormodel}
\|\Delta \mathbf H\|_F\leq \varepsilon,\nonumber
\enn
which indicates that the uncertainty channel matrix of the second hop at the relay node is norm bounded by some small positive number $\varepsilon$.

\subsection{Problem Formulation}
In this paper, we consider two widely used performance metrics: the maximization of the received SNR and the minimization of per-antenna power at the relay with a given SNR target.

\subsubsection{SNR Maximization}
By maximizing the worst case received SNR over the channel uncertainty region
with per-antenna power constraints at the relay, the problem can be formulated as
\ben
\mathcal P_1:&&\max_{\mathbf W, \mathbf b, \mathbf r}\min_{\|\Delta \mathbf H\|_F\leq \varepsilon}\frac{P_s|\mathbf r^H(\tilde {\mathbf H}_{rd}+\Delta \mathbf H)\mathbf W\mathbf H_{sr}\mathbf b|^2}{\sigma_r^2\|\mathbf r^H(\tilde {\mathbf H}_{rd}+\Delta \mathbf H)\mathbf W\|_2^2+\sigma_d^2}\nonumber\\
\text{s.t.}&&[P_s\mathbf W\mathbf H_{sr}\mathbf b\mathbf b^H\mathbf H_{sr}^H\mathbf W^H+\sigma_r^2\mathbf W\mathbf W^H]_{i,i}\leq P_r,\nonumber\\
&&\|\mathbf b\|_2=1, \|\mathbf r\|_2=1.\nonumber
\enn
where $P_r$ is the maximum per-antenna power consumption  of the relay node.
\subsubsection{Power Minimization Problem}
The per antenna power minimization problem  with a given received SNR target $\gamma_0$ over all possible channel uncertainty errors is formulated as
\ben\label{equ:powermin}
\mathcal P_2:&&\min _{\mathbf W, \mathbf b, \mathbf r}P_r\nonumber\\
\text{s.t}.&& \min_{\|\Delta \mathbf H\|_F\leq \varepsilon}\frac{P_s|\mathbf r^H(\tilde {\mathbf H}_{rd}+\Delta \mathbf H)\mathbf W\mathbf H_{sr}\mathbf b|^2}{\sigma_r^2\|\mathbf r^H(\tilde {\mathbf H}_{rd}+\Delta \mathbf H)\mathbf W\|_2^2+\sigma_d^2}\geq \gamma_0,\nonumber\\
&&[P_s\mathbf W\mathbf H_{sr}\mathbf b\mathbf b^H\mathbf H_{sr}^H\mathbf W^H+\sigma_r^2\mathbf W\mathbf W^H]_{i,i}\leq P_r,\nonumber\\
&&\|\mathbf b\|_2=1, \|\mathbf r\|_2=1.\nonumber
\enn

In the following sections, we will first solve problem $\mathcal P_1$, giving the closed-form solution of $\mathbf b$ and the semi-closed form solution of $(\mathbf W, \mathbf r)$ as a function of $\mathbf w$. Based on this result, we will further propose an iterative algorithm to obtain $\mathbf w$. Finally, we will show the relationship between problem $\mathcal P_1$ and problem $\mathcal P_2$.

\section{SNR Maximization Problem}
\subsection{A Semi-closed Form Solution of the Joint BF design}
In this section, we will solve problem $\mathcal P_1$.
By fixing the source BF vector $\mathbf b$, problem $\mathcal P_1$ becomes
\ben\label{equ:powermin2}
\max _{\mathbf W, \mathbf r}\min_{\|\Delta \mathbf H\|_F\leq \varepsilon}&&\frac{P_s|\mathbf r^H(\tilde {\mathbf H}_{rd}+\Delta \mathbf H)\mathbf W\mathbf g|^2}{\sigma_r^2\|\mathbf r^H(\tilde {\mathbf H}_{rd}+\Delta \mathbf H)\mathbf W\|_2^2+\sigma_d^2}\nonumber\\
\text{s.t}.&&[P_s\mathbf W\mathbf g\mathbf g^H\mathbf W^H+\sigma_r^2\mathbf W\mathbf W^H]_{i,i}\leq P_r,\nonumber\\
&&\|\mathbf r\|_2=1.
\enn
where we have defined $\mathbf g\triangleq \mathbf H_{sr}\mathbf b$.
To further analyze problem~\eqref{equ:powermin2}, we need to introduce Lemma~\ref{lemma:powermin}, which verifies the rank-one property of the optimal $\mathbf W$.
\begin{lemma}[Rank one condition]\label{lemma:powermin}
The optimal $\mathbf W$ in problem~\eqref{equ:powermin2} is given by
$\mathbf W=\frac{\tilde P_r}{\|\mathbf g\|_2}\mathbf w \mathbf g^H$,
for some $\mathbf w\triangleq[w_1, \cdots, w_N]^T\in \mathbb C^N$, where
$\tilde P_r\triangleq \sqrt{\frac{P_r}{P_s\|\mathbf g\|_2^2+\sigma_r^2}}$.
\end{lemma}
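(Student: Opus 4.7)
The plan is to exploit the observation that $\mathbf W$ enters problem~\eqref{equ:powermin2} only through two ingredients: the vector $\mathbf W\mathbf g$ (which drives the signal power in both the SNR numerator and the signal-related part of the per-antenna constraint) and the matrix $\mathbf W\mathbf W^H$ (which drives the relay-noise contribution to the SNR denominator and to the per-antenna constraint). My strategy is to split any feasible $\mathbf W$ into a part that acts along $\mathbf g$ and a part that acts on $\mathbf g^\perp$, and show that the latter component can only hurt both the worst-case SNR and the power budget, so any optimizer must lie in the rank-one form $\mathbf w\mathbf g^H$ up to scaling.

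Concretely, let $\Pi_{\mathbf g}\triangleq \mathbf g\mathbf g^H/\|\mathbf g\|_2^2$ and write $\mathbf W=\mathbf A+\mathbf B$, where $\mathbf A\triangleq \mathbf W\Pi_{\mathbf g}$ and $\mathbf B\triangleq \mathbf W(\mathbf I-\Pi_{\mathbf g})$. Since $\Pi_{\mathbf g}$ is an orthogonal projector, two elementary facts follow: first, $\mathbf B\mathbf g=\mathbf 0$, so $\mathbf W\mathbf g=\mathbf A\mathbf g$; and second, $\mathbf A\mathbf B^H=\mathbf W\Pi_{\mathbf g}(\mathbf I-\Pi_{\mathbf g})\mathbf W^H=\mathbf 0$, which gives the Pythagorean identity $\mathbf W\mathbf W^H=\mathbf A\mathbf A^H+\mathbf B\mathbf B^H$. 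With these, the SNR numerator is unchanged upon replacing $\mathbf W$ by $\mathbf A$, while for every admissible $\Delta\mathbf H$ the denominator satisfies $\sigma_r^2\|\mathbf r^H(\tilde{\mathbf H}_{rd}+\Delta\mathbf H)\mathbf W\|_2^2+\sigma_d^2\ \ge\ \sigma_r^2\|\mathbf r^H(\tilde{\mathbf H}_{rd}+\Delta\mathbf H)\mathbf A\|_2^2+\sigma_d^2$, because $\mathbf B\mathbf B^H\succeq\mathbf 0$. Taking the pointwise inequality inside the inner $\min_{\Delta\mathbf H}$ shows that the worst-case SNR achieved by $\mathbf A$ is at least that of $\mathbf W$. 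The same identity applied to the per-antenna constraint shows that $[P_s\mathbf A\mathbf g\mathbf g^H\mathbf A^H+\sigma_r^2\mathbf A\mathbf A^H]_{i,i}=[P_s\mathbf W\mathbf g\mathbf g^H\mathbf W^H+\sigma_r^2\mathbf W\mathbf W^H]_{i,i}-\sigma_r^2[\mathbf B\mathbf B^H]_{i,i}\le P_r$, so $\mathbf A$ is still feasible. Hence one may assume $\mathbf B=\mathbf 0$, i.e., $\mathbf W=\mathbf W\Pi_{\mathbf g}=\tilde{\mathbf w}\mathbf g^H$ with $\tilde{\mathbf w}\triangleq \mathbf W\mathbf g/\|\mathbf g\|_2^2$.

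The last step is cosmetic: writing $\tilde{\mathbf w}=(\tilde P_r/\|\mathbf g\|_2)\mathbf w$ for a free vector $\mathbf w\in\mathbb C^N$ and substituting yields $P_s\mathbf W\mathbf g\mathbf g^H\mathbf W^H+\sigma_r^2\mathbf W\mathbf W^H=\tilde P_r^2(P_s\|\mathbf g\|_2^2+\sigma_r^2)\mathbf w\mathbf w^H=P_r\mathbf w\mathbf w^H$, so the per-antenna constraint collapses to the clean form $|w_i|^2\le 1$, which presumably motivates the author's choice of normalization and is exactly the reason for parametrizing things this way.

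The main obstacle I anticipate is the min-max step: one has to be sure that the pointwise-in-$\Delta\mathbf H$ denominator inequality survives the inner minimization over $\|\Delta\mathbf H\|_F\le\varepsilon$, so that the conclusion is about worst-case SNRs and not merely a single channel realization. This is not hard once one recognizes that the denominator is monotone in $\mathbf W\mathbf W^H$ uniformly in $\Delta\mathbf H$, but it relies crucially on the orthogonality identity $\mathbf A\mathbf B^H=\mathbf 0$; without it, the cross terms between the $\mathbf g$-aligned and $\mathbf g$-orthogonal parts would prevent a clean monotonicity statement.
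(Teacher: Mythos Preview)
Your proof is correct and follows essentially the same route as the paper: the paper performs an SVD of $\mathbf g$ to obtain a unitary $\mathbf U$ whose first column is $\mathbf g/\|\mathbf g\|_2$, writes $\mathbf W=\mathbf Y\mathbf U^H$ with $\mathbf Y=[\bar{\mathbf w}\ \ \mathbf Z_y]$, and then shows that setting $\mathbf Z_y=\mathbf 0$ can only raise the SNR while preserving the per-antenna constraints---which is exactly your projector decomposition $\mathbf W=\mathbf A+\mathbf B$ expressed in a rotated coordinate system. Your handling of the inner $\min_{\Delta\mathbf H}$ via the pointwise denominator inequality is clean and in fact slightly more explicit than the paper's treatment of that step.
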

\begin{proof}
See Appendix~A.
\end{proof}

Lemma~\ref{lemma:powermin} establishes  the fact that the optimal $\mathbf W$ in problem~\eqref{equ:powermin2}  is the combination of two equalizers. One is the MRC equalizer $\mathbf g$, and the other one, $\mathbf w$,
is to be determined.
By Lemma \ref{lemma:powermin}, problem \eqref{equ:powermin2} becomes
\ben\label{equ:bylemma_ini}
\max_{\mathbf w, \mathbf r}\min_{\|\Delta \mathbf H\|_F\leq \varepsilon}&&\frac{\tilde P_r^2P_s\|\mathbf g\|_2^2|\mathbf r^H(\tilde {\mathbf H}_{rd}+\Delta \mathbf H)\mathbf w|^2}{\tilde P_r^2\sigma_r^2|\mathbf r^H(\tilde {\mathbf H}_{rd}+\Delta \mathbf H)\mathbf w|^2+\sigma_d^2},\nonumber\\
\text{s.t.}&&|w_i|\leq 1, 1\leq i\leq N, \nonumber\\
&&\|\mathbf r\|_2=1.
\enn

Since the received SNR in \eqref{equ:bylemma_ini} is an increasing function with respect to $\|\mathbf g\|_2$, it can be immediately concluded that the optimal $\mathbf b$ is  the principal eigenvector of $\mathbf H_{sr}^H\mathbf H_{sr}$. Similarly, the received SNR is also an increasing function with respect to  $|\mathbf r^H(\tilde {\mathbf H}_{rd}+\Delta \mathbf H)\mathbf w|^2$, then the optimal $(\mathbf w, \mathbf r)$ is the solution of the following problem
\ben\label{equ:bylemma_ini2}
\max_{\mathbf w, \mathbf r}\min_{\|\Delta \mathbf H\|_F\leq \varepsilon}&&|\mathbf r^H(\tilde {\mathbf H}_{rd}+\Delta \mathbf H)\mathbf w|,\nonumber\\
\text{s.t.}&&|w_i|\leq 1, \|\mathbf r\|_2=1.
\enn

According to \cite[Lemma 3.1]{yongweihuang_multicast}, the optimal value of the inner minimization problem of \eqref{equ:bylemma_ini2} is given by $\max\left\{|\mathbf r^H\tilde {\mathbf H}_{rd}\mathbf w|-\varepsilon\|\mathbf w\|_2, 0\right\}$. However,
when $\varepsilon\|\mathbf w\|_2<|\mathbf r^H\tilde {\mathbf H}_{rd}\mathbf w|$, the objective function in \eqref{equ:bylemma_ini2} is zero, which leads to the invalid transmission. Therefore,  we aim at optimizing the following problem
\ben
\max_{\mathbf w, \mathbf r}&&|\mathbf r^H\tilde {\mathbf H}_{rd}\mathbf w|-\varepsilon\|\mathbf w\|_2
\nonumber\\
\text{s.t.}&&|w_i|\leq 1, \|\mathbf r\|_2=1. \label{equ:abover}
\enn

For any fixed $\mathbf w$, the optimal $\mathbf r$ in problem~\eqref{equ:abover} can be directly given by $\mathbf r=\frac{\tilde {\mathbf H}_{rd}\mathbf w}{\|\tilde {\mathbf H}_{rd}\mathbf w\|_2}$.
Substituting this expression into \eqref{equ:abover}, we get
\ben\label{equ:clear}
\max_{\mathbf w}&&f(\mathbf w)\triangleq \|\tilde {\mathbf H}_{rd}\mathbf w\|_2-\varepsilon\|\mathbf w\|_2.\nonumber\\
\text{s.t.}&&|w_i|\leq 1.
  \enn

Notice that  the original complicated nonconvex problem $\mathcal P_1$ with matrix-valued variables $(\mathbf W, \mathbf b, \mathbf r)$, has now been
transformed into a much simplifier problem, problem~\eqref{equ:clear},
with variable $\mathbf w$. The remaining challenge is to determine the optimal solution and the maximum value in \eqref{equ:clear},
denoted as $\mathbf w^\circ$ and $f^\circ$, respectively.
Combining the above discussion and Lemma~\ref{lemma:powermin}, we directly obtain the following theorem.
\begin{theorem}\label{theorem:snrmax}
The optimal $(\mathbf W, \mathbf b, \mathbf r)$ in problem $\mathcal P_1$ are given by
\ben
\mathbf W&=&\sqrt{\frac{P_r}{\|\mathbf H_{sr}\mathbf b\|_2^2(P_s\|\mathbf H_{sr}\mathbf b\|_2^2+\sigma_r^2)}}\mathbf w^\circ\mathbf b^{ H}\mathbf H_{sr}^H,\nonumber\\
\mathbf b&=&{\bm \nu}(\mathbf H_{sr}^H\mathbf H_{sr}),\nonumber\\
\mathbf r&=&\frac{\tilde {\mathbf H}_{rd}\mathbf w^\circ}{\|\tilde {\mathbf H}_{rd}\mathbf w^\circ\|_2},\nonumber
\enn
respectively.
The optimal received SNR corresponds to
\ben\label{equ:theorem1}
\text{SNR}=\frac{\tilde P_r^2P_s\|\mathbf H_{sr}\mathbf b\|_2^2\max\{f^\circ, 0\}^2}{\tilde P_r^2\sigma_r^2\max\{f^\circ, 0\}^2+\sigma_d^2}.
\enn
\end{theorem}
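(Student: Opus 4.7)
The plan is to assemble Theorem~\ref{theorem:snrmax} by chaining together the reductions already carried out in the paragraphs preceding the statement. First I would invoke Lemma~\ref{lemma:powermin} to parametrize every candidate optimal relay BF matrix as $\mathbf W=\frac{\tilde P_r}{\|\mathbf g\|_2}\mathbf w\mathbf g^H$ with $\mathbf g=\mathbf H_{sr}\mathbf b$, which collapses the matrix-valued problem $\mathcal P_1$ to the vector-valued problem~\eqref{equ:bylemma_ini}. A short calculation then verifies that the per-antenna budget becomes exactly $|w_i|\le 1$ and that the objective depends on $\mathbf b$ only through $\|\mathbf g\|_2^2=\mathbf b^H\mathbf H_{sr}^H\mathbf H_{sr}\mathbf b$ and on $(\mathbf w,\mathbf r,\Delta\mathbf H)$ only through the modulus $|\mathbf r^H(\tilde{\mathbf H}_{rd}+\Delta\mathbf H)\mathbf w|$.

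Next I would separate the three optimizations by monotonicity. Rewriting the objective in the form $\frac{P_rP_s\|\mathbf g\|_2^2|c|^2}{P_r\sigma_r^2|c|^2+\sigma_d^2(P_s\|\mathbf g\|_2^2+\sigma_r^2)}$ (with $c=\mathbf r^H(\tilde{\mathbf H}_{rd}+\Delta\mathbf H)\mathbf w$) and differentiating with respect to $\|\mathbf g\|_2^2$ shows strict monotone increase, so Rayleigh--Ritz under $\|\mathbf b\|_2=1$ forces $\mathbf b={\bm \nu}(\mathbf H_{sr}^H\mathbf H_{sr})$. The objective is likewise strictly increasing in $|c|$, hence $(\mathbf w,\mathbf r)$ is obtained by solving the reduced worst-case modulus problem~\eqref{equ:bylemma_ini2}, which, via \cite[Lemma~3.1]{yongweihuang_multicast}, has inner value $\max\{|\mathbf r^H\tilde{\mathbf H}_{rd}\mathbf w|-\varepsilon\|\mathbf w\|_2,0\}$.

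Restricting attention to points where the inner value is positive (points with zero inner value yield zero SNR and can only tie the optimum) gives problem~\eqref{equ:abover}; optimizing $\mathbf r$ by Cauchy--Schwarz produces $\mathbf r=\tilde{\mathbf H}_{rd}\mathbf w/\|\tilde{\mathbf H}_{rd}\mathbf w\|_2$, and the remaining scalar problem~\eqref{equ:clear} has optimal value $f^\circ$ attained at some $\mathbf w^\circ$. Substituting $\mathbf b$, $\mathbf w^\circ$ and $\mathbf r$ back into the Lemma~\ref{lemma:powermin} parametrization reproduces the first formula of the theorem; plugging $(\mathbf w^\circ,\mathbf r)$ into the SNR expression, and reinstating the outer $\max\{\cdot,0\}^2$ to cover the degenerate regime where the channel uncertainty swamps the signal, yields~\eqref{equ:theorem1}.

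The main subtlety I expect to address carefully is the bookkeeping around the $\max\{\cdot,0\}$ clip in the worst-case modulus: one must verify that removing the clip when passing from the inner-max expression to problem~\eqref{equ:abover} is lossless, i.e.\ that if every feasible $(\mathbf w,\mathbf r)$ satisfies $|\mathbf r^H\tilde{\mathbf H}_{rd}\mathbf w|\le\varepsilon\|\mathbf w\|_2$ then $f^\circ\le 0$ and the SNR formula correctly returns $0$, whereas otherwise the unconstrained maximizer of $|\mathbf r^H\tilde{\mathbf H}_{rd}\mathbf w|-\varepsilon\|\mathbf w\|_2$ already lies in the regime where the clip is inactive. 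Once this is dispatched, the remainder is a direct substitution chain from Lemma~\ref{lemma:powermin} through~\eqref{equ:bylemma_ini}--\eqref{equ:clear}, and Theorem~\ref{theorem:snrmax} follows.
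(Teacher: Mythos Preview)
Your proposal is correct and follows essentially the same approach as the paper: the paper's ``proof'' is precisely the chain of reductions in the paragraphs preceding the theorem (Lemma~\ref{lemma:powermin} $\to$ \eqref{equ:bylemma_ini} $\to$ monotonicity in $\|\mathbf g\|_2$ and in $|\mathbf r^H(\tilde{\mathbf H}_{rd}+\Delta\mathbf H)\mathbf w|$ $\to$ \eqref{equ:bylemma_ini2} $\to$ \cite[Lemma~3.1]{yongweihuang_multicast} $\to$ \eqref{equ:abover} $\to$ \eqref{equ:clear}), capped by the sentence ``Combining the above discussion and Lemma~\ref{lemma:powermin}, we directly obtain the following theorem.'' If anything, your handling of the $\max\{\cdot,0\}$ clip is more explicit than the paper's, which dismisses the degenerate regime in one line as ``invalid transmission.''
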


\subsection{A Low-complexity Iterative Algorithm for Solving $\mathbf w$}\label{subsec:opWbr}
Due to the  non-convex nature of \eqref{equ:clear}, its optimal solution is difficult to obtain in general.
Let us define $\mathbf \Upsilon\triangleq \mathbf w\mathbf w^H$.
By the equation
\ben
\|\tilde{\mathbf H}_{rd}\mathbf w\|_2=\sqrt{\|\tilde{\mathbf H}_{rd}\mathbf w\|_2^2}=\sqrt{\text{tr}(\tilde {\mathbf H}_{rd}^H\tilde {\mathbf H}_{rd}\mathbf \Upsilon)},\nonumber
\enn
and dropping the rank one constraint, we can transform problem \eqref{equ:clear} into a relaxed form as
\ben\label{equ:relax2}
\max_{\mathbf \Upsilon, q_1, q_2} &&\sqrt{q_1}-\varepsilon\sqrt{q_2}\nonumber\\
\text{s.t.}&& \text{tr}(\tilde {\mathbf H}_{rd}^H\tilde {\mathbf H}_{rd}\mathbf \Upsilon)=q_1,\,\,\text{tr}(\mathbf \Upsilon)=q_2,\nonumber\\
&&\text{tr}(\mathbf E_i\mathbf \Upsilon)\leq 1, \,\,1\leq i\leq N,\nonumber\\
&&\mathbf \Upsilon\succeq \mathbf 0,\label{equ:3N1}
\enn
where $\mathbf E_i\in \mathbb R^{N\times N}$ denotes a matrix with all zero entries except for the $(i,i)$th
entry which equals to one.

Note that since $\sqrt{q_1}$ and $\sqrt{q_2}$ are concave,  the objective function of problem~\eqref{equ:relax2} is the difference of two concave  functions. Such a problem is recognized as the difference of two convex functions programming (DC programming) problem, which can be efficiently solved via the POlynomial Time DC (POTDC) approach proposed in \cite{AK_POTDC}.
The main idea of the POTDC  approach is replacing $\sqrt{q_2}$ by its first order Taylor expansion around some point  $q_c$ and then solving the resulting convex problem at the $\kappa$th iteration, i.e.,
\ben
 \max_{\mathbf \Upsilon, q_1, q_2}&& \sqrt{q_1}-\varepsilon(q_2-q_c)/2\sqrt{q_c}-\varepsilon\sqrt{q_c}\nonumber\\
\text{s.t.}&& \text{tr}(\tilde {\mathbf H}_{rd}^H\tilde {\mathbf H}_{rd}\mathbf \Upsilon)=q_1, \,\, \text{tr}(\mathbf \Upsilon)=q_2,\nonumber\\
&&\text{tr}(\mathbf E_i\mathbf \Upsilon)\leq 1, \,\,1\leq i\leq N,\nonumber\\
&&\mathbf \Upsilon\succeq \mathbf 0, \,\,0\leq q_2\leq N.\label{equ:q2}
\enn
where the constraint $0\leq q_2\leq N$ is added since the feasible region of $q_2$ is a requirement for the POTDC approach.

Problem \eqref{equ:q2} is recognized as a semidefinite programming (SDP) problem, and thus can be efficiently solved by MATLAB package such as CVX \cite{CVX}.
The optimal $q_2$ is obtained for updating   $q_c$. Let $\kappa=\kappa+1$, then one can start the new iteration until some threshold meets.
After obtaining the optimal $\mathbf \Upsilon^\circ$, one can  extract $\mathbf w$ from the \emph{rank-1 approximation} of  $\mathbf \Upsilon^\circ$. That is: when $\mathbf \Upsilon^\circ$ is of
      rank one, then let $\mathbf w_{RA}=\sqrt{\lambda_{\max}(\mathbf \Upsilon^\circ)}{\bm \nu}(\mathbf \Upsilon^\circ)$.
Otherwise, do the randomization step \cite{SDR} to get an approximation solution as
\ben
\tilde {\mathbf w}_{RA}\sim \mathcal C\mathcal N(\mathbf 0, \mathbf \Upsilon^\circ), \,\,
\mathbf \Upsilon^{\circ}\approx\tilde {\mathbf w}_{RA}\tilde {\mathbf w}_{RA}^{H}.\nonumber\label{equ:random}
\enn
Let $\mathbf w_{RA}=\tilde {\mathbf w}_{RA}/\beta$, where $\beta$ is maximum absolute value among all elements of $\tilde {\mathbf w}_{RA}$.
Then we get an approximation of  $f^\circ$
as  $f_{RA}=\|\tilde {\mathbf H}_{rd}\mathbf w_{RA}\|_2-\varepsilon\|\mathbf w_{RA}\|_2$.
The  complexity of this POTDC-based approach is about $\mathcal O(N^7)$ times the number of iterations,
 which is fairly high.

We will next propose an iterative algorithm to solve $\mathbf w$. The advantage of this algorithm lies in that it has much lower complexity and is easier to implement since only the arithmetic operation rather than the advanced software package is required. From the simulation results in section~\ref{sec:simu}, one can see that it can reach a solution that is very close to the POTDC-based  algorithm.

Notice that problem \eqref{equ:clear} is equivalent to problem~\eqref{equ:abover}. Define the objective function in \eqref{equ:abover} as $\Phi(\mathbf w, \mathbf r)\triangleq |\mathbf r^H\tilde {\mathbf H}_{rd}\mathbf w|_2-\varepsilon\|\mathbf w\|_2$. Then we can determine the variable $(\mathbf w, \mathbf r)$
by the alternating optimization method: Step $1$,
for any fixed $\mathbf w$, determine  the optimal $\mathbf r$ as
 $\mathbf r=\frac{\tilde {\mathbf H}_{rd}\mathbf w}{\|\tilde {\mathbf H}_{rd}\mathbf w\|_2}$; Step $2$, for fixed $\mathbf r$,
 determine the optimal $\mathbf w$. Repeat step $1-2$ until convergence. Inspired by this sense,
we will solve problem~\eqref{equ:abover} with fixed $\mathbf r$, that is,
\ben
\max_{\mathbf w}&&|\mathbf r^H\tilde {\mathbf H}_{rd}\mathbf w|-\varepsilon\|\mathbf w\|_2
\nonumber\\
\text{s.t.}&&|w_i|\leq 1. \label{equ:above}
\enn

Note that the phase term of each $w_i$ has no impact on  $\|\mathbf w\|_2$. Then in order to  maximize $| \mathbf r^H\tilde {\mathbf H}_{rd}\mathbf w|-\varepsilon\|\mathbf w\|_2$, it must be equal to the phase term of $\tilde {\mathbf h}_i^H\mathbf r$. In this case, problem~\eqref{equ:above} becomes
\ben
\max_{\mathbf |w_i|}&&\sum_{i=1}^N|w_i||\tilde{\mathbf h}_i^H\mathbf r|-\varepsilon\sqrt{\sum_{i=1}^N|w_i|^2},\nonumber\\
\text{s.t.}&&|w_i|\leq 1.\label{equ:gouqi2}
\enn

Obviously, problem \eqref{equ:gouqi2} is  convex, and thus can be solve efficiently. By
further simplification, we will give an analytical optimal solution of problem \eqref{equ:gouqi2}.
 Refer to $\alpha_i\triangleq |\tilde {\mathbf h}_i^H\mathbf r|$ as the antenna again for the $i$th antenna, and
 denote $\pi$ as a permutation of $\{1, \cdots, N\}$ such that $\alpha_{\pi(i)}$
 is sorted in a non-decreasing order, i.e., $\alpha_{\pi(1)}\leq\cdots \leq
\alpha_{\pi(N)}$. Then it can be shown that $|w_{\pi(i)}^\circ|$ is also sorted in a non-decreasing order.
Suppose the opposite  that $|w^\circ_{\pi(i_1)}|>|w^\circ_{\pi(i_2)}|$, for $i_1<i_2$. Then the value
of \eqref{equ:gouqi2} can always be increased by  swapping the value
of $|w^\circ_{\pi(i_1)}|$ and $|w^\circ_{\pi(i_2)}|$,  since $\sum_{i=1}^N|w_{\pi(i)}|\alpha_{\pi(i)}$ becomes
larger and $\varepsilon\sqrt{\sum_{i=1}^N|w_{\pi(i)}|^2}$ remains unchanged.
This  contradicts with the assumption that $\mathbf w^\circ$ is the optimal solution. Thereby $|w_{\pi(i)}^\circ|$ is also sorted in a non-decreasing order.

To solve \eqref{equ:gouqi2}, let us
assume that  at least  $N-k$ antennas  uses the maximum power, that is,
 $|w_{\pi(k+1)}|=\cdots=|w_{\pi(N)}|=1$. Denote $\mathcal S(k)=\{\pi(1), \cdots, \pi(k)\}$ as the set of antennas that are free to choose their optimal power consumption. Then problem \eqref{equ:gouqi2} can be rewritten as
\ben
\max_{k=1, \cdots, N}\max_{\mathbf |w_{i}|}&&\sum_{i\notin \mathcal S(k)}\alpha_{i}+\sum_{i\in \mathcal S(k)}|w_i|\alpha_{i}\nonumber\\
&&\quad \quad-\varepsilon\sqrt{N-k+\sum_{i\in \mathcal S(k)}|w_i|^2},\nonumber\\
\text{s.t.}&&|w_{i}|\leq 1, i\in \mathcal S(k).\label{equ:chengzi}
\enn

Problem~\eqref{equ:chengzi} has been  decomposed into $N$ subproblems. Denote $\mathbf w^\star\triangleq (w_{\pi(1)}^\star, \cdots, w_{\pi(k)}^\star)$ and $\Omega(k)$ as the optimal  solution and optimal value  of
the $k$th subproblem,
\ben\label{equ:chengzi2}
\max_{\mathbf |w_{i}|}&&\sum_{i\notin \mathcal S(k)}\alpha_{i}+\sum_{i\in \mathcal S(k)}|w_i|\alpha_{i}-\varepsilon\sqrt{N-k+\sum_{i\in \mathcal S(k)}|w_i|^2}\nonumber\\
\text{s.t.}&&|w_{i}|\leq 1, i\in \mathcal S(k).
\enn
respectively.
Then when $k$ increases, $\mathcal S(k)$ becomes larger, and more antenna are free to choose their optimal power. Therefore, it can be easily verified that $\Omega(k)$ is a non-decreasing function. Moreover, according to the above discussion,
it is easy to know that
 $|w_{\pi(1)}^\star|\leq \cdots\leq |w_{\pi(k)}^\star|\leq 1$.

Problem~\eqref{equ:chengzi2} is a constrained convex problem, whose optimal solution can be uniquely determined by (KKT) conditions, or i.e.,
\ben
&&\alpha_{i}-\varepsilon\frac{|w_i^\star|}{\sqrt{N-k+\sum_{i\in \mathcal S(k)}|w_i^\star|^2}}-\mu_i=0,\label{equ:kkt1}\\
&&\mu_i(|w_{i}^\star|-1)=0,\label{equ:kkt2}
\enn
where $\mu_i$ is a dual variable for the $i$th power constraint.
Suppose that $|w_{\pi(k)}^\star|=1$, then there will be $k-1$ antennas left to choose their optimal power, and
the $k$th subproblem boils down to the $k-1$th subproblem, i.e., $\Omega(k)=\Omega(k-1)$. Hence in the following,
 we concentrate on the case when $|w_{\pi(1)}^\star|\leq \cdots\leq |w_{\pi(k)}^\star|<1$, which results in $\mu_i=0$ and reduces equality
\eqref{equ:kkt1} to
\ben
\alpha_{i}^2&=&\frac{\varepsilon^2|w_i^\star|^2}{N-k+\sum_{i\in \mathcal S(k)}|w_i^\star|^2}, i\in \mathcal S(k).\label{equ:get1}
\enn
By adding the above equation for $i\in \mathcal S(k)$, we get
\ben
\sum_{i\in \mathcal S(k)}\alpha_{i}^2=\frac{\varepsilon^2\sum_{i\in \mathcal S(k)}|w_i^\star|^2}{N-k+\sum_{i\in \mathcal S(k)}|w_i^\star|^2},\nonumber
\enn
or equivalently,
\ben
\sum_{i\in\mathcal S(k)}|w_i^\star|^2=\frac{(N-k)\sum_{i\in \mathcal S(k)}\alpha_{i}^2}{\varepsilon^2-\sum_{i\in \mathcal S(k)}\alpha_{i}^2}.\label{equ:get2}
\enn
Substituting \eqref{equ:get2} into \eqref{equ:get1}, we have
\ben\label{equ:by}
|w_i^\star|=\alpha_{i}\sqrt{\frac{N-k}{\varepsilon^2-\sum_{i\in \mathcal S(k)}\alpha_{i}^2}}, i\in \mathcal S(k).\label{equ:get3}
\enn
To make sure that $|w_{\pi(k)}^\star|<1$, we must have
\ben
\varepsilon^2>\sum_{i\in \mathcal S(k)}\alpha_{i}^2+\alpha_{\pi(k)}^2(N-k)\triangleq \chi(k).\label{equ:bk}
\enn


It is easy to prove that $\chi(k)$ is a non-decreasing function as we show below
\ben
&&\chi(k+1)-\chi(k)\nonumber\\
&=&\sum_{i\in \mathcal S(k+1)}\alpha_{i}^2+\alpha_{\pi(k+1)}^2(N-k-1)\nonumber\\
&&-\sum_{i\in \mathcal S(k)}\alpha_{i}^2-\alpha_{\pi(k)}^2(N-k)\nonumber\\
&=&\alpha_{\pi(k+1)}^2+\alpha_{\pi(k+1)}^2(N-k-1)-\alpha_{\pi(k)}^2(N-k)\nonumber\\
&=&(N-k)(\alpha_{\pi(k+1)}^2-\alpha_{\pi(k)}^2)\geq 0.\nonumber
\enn
Denote $k^\circ$ as the maximum number that satisfies \eqref{equ:bk}, or i.e., $\chi(1)\leq\cdots \leq \chi(k^\circ)< \varepsilon^2\leq \chi(k^\circ+1)\leq \cdots\leq \chi(N)$. Then for any $k>k^\circ$, we have $|w^\star_{\pi(k)}|=1$, and hence the $k$th subproblem of \eqref{equ:chengzi} boils down to the $k^\circ$th  subproblem. More specifically, the maximum value of problem \eqref{equ:gouqi2} corresponds to
\ben
&&\max_{k=1, \cdots, N}\Omega(k)\nonumber\\
&=&\max_{k=1, \cdots, k^\circ}\Omega(k)
\overset{(a)}=\Omega(k^\circ)\nonumber\\
&\overset{(b)}=&\sum_{i\notin \mathcal S(k)}\alpha_{i}-\sqrt{(N-k^\circ)(\varepsilon^2-\sum_{i\in \mathcal S(k^\circ)}\alpha_{i}^2)},\label{equ:gk}
\enn
where $(a)$ is due to the fact that $\Omega(k)$ is a non-decreasing function, and $(b)$ is obtained by substituting \eqref{equ:get2} and \eqref{equ:by} into \eqref{equ:chengzi2}.

Equation~\eqref{equ:gk} tells us that  the maximum value of $\Omega(k)$ is achieved by $k=k^\circ$. Then by \eqref{equ:by},
the optimal $\mathbf w$ in problem \eqref{equ:above} corresponds to $w_i^\circ=|w_i^\circ|e^{j\phi_i}$, where $e^{j\phi_i}$
is the phase term of $\tilde {\mathbf h}_i^H\mathbf r$, $j=\sqrt{-1}$, and
\ben\label{equ:using}
|w_i^\circ|=\left\{
\begin{array}{lcl}
\alpha_{i}\sqrt{\frac{N-k^\circ}{\varepsilon^2-\sum_{i\in \mathcal S(k^\circ)}\alpha_{i}^2}},&&i \in \mathcal S(k^\circ)\\
1&& i \notin \mathcal S(k^\circ)
\end{array}
\right.
\enn

By using \eqref{equ:using}, we have solved problem~\eqref{equ:above} in an analytical way. Then we can propose an alternating optimization method for solving problem~\eqref{equ:abover},
as formally described in
Algorithm~\ref{alg:sub}.
The convergence of Algorithm~\ref{alg:sub} is guaranteed, which is based on the following relationship
\ben
\Phi(\mathbf w^{(\kappa)}, \mathbf r^{(\kappa)})\leq \Phi(\mathbf w^{(\kappa+1)}, \mathbf r^{(\kappa)})\leq \Phi(\mathbf w^{(\kappa+1)}, \mathbf r^{(\kappa+1)}).\nonumber
\enn
It follows that each iteration will increase the value of $\Phi(\mathbf w, \mathbf r)$ and obviously $\Phi(\mathbf w, \mathbf r)$ has an upper bound
for finite power constraints. Therefore this iterative algorithm must converge.

\begin{algorithm}[htb]
\caption{The alternating optimization method for joint solution of $(\mathbf w, \mathbf r)$}
\label{alg:sub}
\begin{algorithmic}[0] 
\REQUIRE ~~\\ 
Set $\mathbf r^{(0)}$ as
a random vector with unit norm and the solution accuracy $\epsilon$. Let $\kappa=0$ and $\Phi(\mathbf w^{(0)}, \mathbf r^{(0)})=0$.
\REPEAT
\STATE 1: Update $\mathbf w^{(\kappa+1)}$ by \eqref{equ:using}.
\STATE 2: Update $\mathbf r^{(\kappa+1)}=\frac{\tilde {\mathbf H}_{rd}\mathbf w^{(\kappa+1)}}{\|\tilde {\mathbf H}_{rd}\mathbf w^{(\kappa+1)}\|_2}$.
\STATE 3: Update $\Phi(\mathbf w^{(\kappa+1)}, \mathbf r^{(\kappa+1)})=|\mathbf r^{(\kappa+1)H}\tilde {\mathbf H}_{rd}\mathbf w^{(\kappa+1)}|-\varepsilon\|\mathbf w^{(\kappa+1)}\|_2$.
\STATE 4: Let $\kappa=\kappa+1$.
\UNTIL{$\left|\Phi(\mathbf w^{(\kappa)}, \mathbf r^{(\kappa)})-\Phi(\mathbf w^{(\kappa-1)}, \mathbf r^{(\kappa-1)})\right|\leq \epsilon$ }  
\RETURN {$\mathbf w_{AO}=\mathbf w^{(\kappa)}$, $\mathbf r_{AO}=\mathbf r^{(\kappa)}$ and $f_{AO}=\Phi(\mathbf w^{(\kappa)}, \mathbf r^{(\kappa)})$.} 
\end{algorithmic}
\end{algorithm}

From \eqref{equ:bk}, one can see that $k^\circ$ is related to $\varepsilon$, which makes $\Omega(k^\circ)$ an implicit function with respect to $\varepsilon$. Specifically, when $\varepsilon^2>\chi(N)=\sum_{i=1}^N|\tilde {\mathbf h}_i^H\mathbf r|^2$, we have $|w_{\pi(N)}|<1$, then $k^\circ=N$, which results in $\Omega(k^\circ)=0$. In this case,
the transmission is declared to be invalid. To further investigate the feasible variation range of $\varepsilon$ for the valid transmission, we give the following proposition.
\begin{proposition}\label{pro:feasibility}
For any fixed $\mathbf r$, the necessary and sufficient condition for  $\Omega(k^\circ)>0$ is
\ben\label{equ:satisfy}
\varepsilon<\sqrt{\chi(N)}\leq \sqrt{\lambda_{\max}(\tilde {\mathbf H}_{rd}\tilde {\mathbf H}_{rd}^H)}.
\enn
\end{proposition}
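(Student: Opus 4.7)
The plan is to leverage the equivalence between the unrestricted problem \eqref{equ:gouqi2} and its piecewise decomposition \eqref{equ:chengzi}: by construction (and as already exploited in \eqref{equ:gk}), $\Omega(k^\circ) = \max_k \Omega(k)$ coincides with the optimal value of $\max_{|w_i|\le 1}\bigl[\sum_{i=1}^N |w_i|\alpha_i - \varepsilon\|\mathbf w\|_2\bigr]$. Thus the proposition reduces to characterizing when this optimum is strictly positive, and the proof splits naturally into three pieces: (i) the rightmost inequality of \eqref{equ:satisfy}, which does not involve $\varepsilon$, (ii) necessity, and (iii) sufficiency.

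For (i), I would rewrite
\[
\chi(N) = \sum_{i=1}^N \alpha_i^2 = \sum_{i=1}^N |\tilde{\mathbf h}_i^H \mathbf r|^2 = \|\tilde{\mathbf H}_{rd}^H\mathbf r\|_2^2 = \mathbf r^H\tilde{\mathbf H}_{rd}\tilde{\mathbf H}_{rd}^H\mathbf r,
\]
and apply the Rayleigh quotient bound with $\|\mathbf r\|_2=1$ to conclude $\chi(N) \le \lambda_{\max}(\tilde{\mathbf H}_{rd}\tilde{\mathbf H}_{rd}^H)$.

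For (ii), I would argue by contrapositive. Assuming $\varepsilon \ge \sqrt{\chi(N)}$, a single application of Cauchy--Schwarz gives
\[
\sum_{i=1}^N |w_i|\alpha_i \le \sqrt{\textstyle\sum_i\alpha_i^2}\,\sqrt{\textstyle\sum_i |w_i|^2} = \sqrt{\chi(N)}\,\|\mathbf w\|_2 \le \varepsilon\,\|\mathbf w\|_2,
\]
so the objective is non-positive on the entire feasible box. Since $\mathbf w = \mathbf 0$ attains $0$ and remains feasible for the $k=N$ subproblem, the maximum is exactly $0$, forcing $\Omega(k^\circ)=0$.

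For (iii), I would supply an explicit feasible witness that saturates Cauchy--Schwarz. The ordering ensures $\alpha_i \le \alpha_{\pi(N)}$, so the choice $|\hat w_i| = \alpha_i/\alpha_{\pi(N)}$ with phases chosen to align $\tilde{\mathbf h}_i^H\mathbf r$ lies in the box and satisfies $\|\hat{\mathbf w}\|_2 = \sqrt{\chi(N)}/\alpha_{\pi(N)}$. A direct computation then gives
\[
\sum_i |\hat w_i|\alpha_i - \varepsilon\|\hat{\mathbf w}\|_2 = \frac{\sqrt{\chi(N)}}{\alpha_{\pi(N)}}\bigl(\sqrt{\chi(N)} - \varepsilon\bigr) > 0,
\]
whence $\Omega(k^\circ)$ majorizes this positive number. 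I do not anticipate a serious obstacle, as Cauchy--Schwarz supplies both the upper bound for necessity and, via its equality case, the shape of the witness for sufficiency. The only care needed is to rule out the degenerate subcase $\alpha_{\pi(N)}=0$, in which $\chi(N)=0$ and the hypothesis $\varepsilon<\sqrt{\chi(N)}$ cannot be met, so there is nothing to prove.
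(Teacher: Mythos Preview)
Your proof is correct and takes a genuinely different, more elementary route than the paper. The paper works with the closed-form expression \eqref{equ:gk} for $\Omega(k^\circ)$ and establishes that $\Omega(k^\circ(\varepsilon))$ is strictly decreasing in $\varepsilon$ by partitioning the $\varepsilon$-axis at the breakpoints $\sqrt{\chi(t)}$ and carrying out an algebraic comparison across adjacent regions; it then evaluates $\Omega(k^\circ(\sqrt{\chi(N)}))=0$ and invokes monotonicity. You instead return to the unrestricted formulation \eqref{equ:gouqi2} and let Cauchy--Schwarz do all the work: the inequality itself yields necessity, and its equality profile $|w_i|\propto\alpha_i$ furnishes the feasible witness for sufficiency. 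Your argument is shorter and requires no tracking of $k^\circ$ or piecewise case analysis; the paper's approach, on the other hand, delivers the additional structural fact that $\Omega(k^\circ)$ is strictly monotone in $\varepsilon$, which is not needed for the proposition but is of independent interest.
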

\begin{proof}
See Appendix~B.
\end{proof}
\begin{remark}
Generally speaking, Algorithm~\ref{alg:sub} may find a local optimal solution,
 which heavily depends on the initial point. Since problem \eqref{equ:abover}
 is not convex, there is no guarantee that this local optimal solution coincides with the global optimum. According to Proposition~\ref{pro:feasibility}, to satisfy \eqref{equ:satisfy}, we can choose the initial point $\mathbf r^{(0)}$ as the principal eigenvector of $\tilde {\mathbf H}_{rd}\tilde {\mathbf H}_{rd}^H$.
For practical use, we also apply Algorithm~\ref{alg:sub} several times to find a better solution, with different initial point $\mathbf r^{(0)}$ each time.
Simulation results shows that  Algorithm~\ref{alg:sub} performs almost as well as the POTDC-based algorithm.
\end{remark}
\begin{remark}\label{rem:opw}
Consider the case when $\varepsilon$ is very small, i.e., $\varepsilon^2\leq \chi(1)$, we have $|w^\circ_{\pi(1)}|=\cdots=|w^\circ_{\pi(N)}|=1$. Algorithm~\ref{alg:sub} has a simplified implementation, with $\mathbf w$ updated by $\mathbf w^\circ\triangleq \mathbf p(\mathbf r)$, where
 $\mathbf p(\mathbf r)\triangleq [e^{j\phi_1}, \cdots, e^{j\phi_N}]^T$. In this case, the relay BF matrix design is given by
\ben\label{equ:w2}
\mathbf W=\sqrt{\frac{P_r}{\|\mathbf H_{sr}\mathbf b\|_2^2(P_s\|\mathbf H_{sr}\mathbf b\|_2^2+\sigma_r^2)}}\mathbf p(\mathbf r)\mathbf b^{ H}\mathbf H_{sr}^H,
\enn
which is the combination of  an MRC equalizer and an  equal-gain-transmission (EGT) equalizer.  Let this scheme be called as \emph{Equal power design}. Theoretically, it is only optimal for $\varepsilon^2\leq \chi(1)$. However,
we will later see in the simulation part that, for relatively large range of $\varepsilon$, this equal power design achieves a performance comparable to that of the POTDC-based algorithm. This can be explained by the expression of the received SNR in \eqref{equ:theorem1}, whose value mainly depends on $f^\circ$.  Notice that by \eqref{equ:clear}, $f^\circ$ is only  related to the channel coefficient matrix $\tilde {\mathbf H}_{rd}$ rather than $P_r$ or $P_s$, hence the gap between different approximations of $f^\circ$ obtained by difference methods is not so obvious for small $\varepsilon$. Therefore, as long as $\varepsilon$ is not too large, the equal power design can be applied as a good alternative, where each antenna at the relay node uses the same power, which greatly relaxes
the power amplifier design, and is exactly the objective of this work.
\end{remark}

\section{Power Minimization Problem}\label{sec:powermin}
In this section, we consider another criterion. Our goal is to  minimize the  maximum per-antenna power for a given SNR requirement, as formulated in problem~$\mathcal P_2$.
We will reveal the relationship between problem~$\mathcal P_1$ and problem~$\mathcal P_2$, showing that they reduce to the same problem, and thus can be solved by the same method.

By fixing the source BF vector $\mathbf b$ and following the similar lines in Lemma~\ref{lemma:powermin}, one can prove that the optimal $\mathbf W$ in problem $\mathcal P_2$ is of rank one,  given by $\mathbf W=\frac{\tilde P_r}{\|\mathbf g\|_2}\mathbf w \mathbf g^H$, with $\mathbf w$ to be determined. Then problem $\mathcal P_2$ can be simplified into
\ben\label{equ:powermin4}
\min _{\mathbf w, \mathbf r}&&P_r\nonumber\\
\text{s.t}.&& \min_{\|\Delta \mathbf H\|_F\leq\varepsilon}P_r|\mathbf r^H(\tilde {\mathbf H}_{rd}+\Delta \mathbf H)\mathbf w|^2\geq \bar{\gamma}_0,\nonumber\\
&&|w_i|\leq 1, \|\mathbf r\|_2=1.
\enn
where $\bar{\gamma}_0\triangleq \frac{\gamma_0\sigma_d^2(P_s\|\mathbf g\|_2^2+\sigma_r^2)}{P_s\|\mathbf g\|_2^2-\gamma_0\sigma_r^2}$. Problem~\eqref{equ:powermin4} can be transformed into
\ben\label{equ:powermin5}
\min _{\mathbf w, \mathbf r}&&P_r\nonumber\\
\text{s.t}.&& P_r\geq \frac{\bar{\gamma}_0}{\min_{\|\Delta \mathbf H\|_F\leq\varepsilon}|\mathbf r^H(\tilde {\mathbf H}_{rd}+\Delta \mathbf H)\mathbf w|^2},\nonumber\\
&&|w_i|\leq 1, \|\mathbf r\|_2=1,
\enn
If we treat $P_r$ as a slack variable, then it is obvious that
problem~\eqref{equ:powermin5} is exactly equivalent to problem~\eqref{equ:bylemma_ini2}.
This shows  that the power minimization problem and the SNR maximization problem actually lead to the same problem.  Letting
$P_r\max\{f^\circ,0\}^2=\bar{\gamma}_0$,
we directly obtain the expression of $P_r$ in \eqref{equ:oppr}.
We summarize the above result in the following theorem.
\begin{theorem}\label{theorem:powermin}
The optimal $(\mathbf W, \mathbf b, \mathbf r)$ in problem $\mathcal P_2$ are given by
\ben\label{equ:powerminopw}
\mathbf W&=&\frac{\mathbf w^\circ\mathbf b^{ H}\mathbf H_{sr}^H}{\|\mathbf H_{sr}\mathbf b\|_2\max\{f^\circ, 0\}}\sqrt{\frac{\gamma_0\sigma_d^2}{P_s\|\mathbf H_{sr}\mathbf b\|_2^2-\gamma_0\sigma_r^2}},\nonumber\\
\mathbf b&=&{\bm \nu}(\mathbf H_{sr}^H\mathbf H_{sr}),\nonumber\\
\mathbf r&=&\frac{\tilde {\mathbf H}_{rd}\mathbf w^\circ}{\|\tilde {\mathbf H}_{rd}\mathbf w^\circ\|_2},\nonumber
\enn
with the corresponding power given by
\ben\label{equ:oppr}
P_r=\frac{\gamma_0\sigma_d^2(P_s\|\mathbf H_{sr}\mathbf b\|_2^2+\sigma_r^2)}{(P_s\|\mathbf H_{sr}\mathbf b\|_2^2-\gamma_0\sigma_r^2)\max\{f^\circ, 0\}^2},
\enn
where $\mathbf w^\circ$ and $f^\circ$ are the optimal solution and the maximum value in \eqref{equ:abover}, respectively.
\end{theorem}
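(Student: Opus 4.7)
The overall plan is to reduce $\mathcal{P}_2$ to the already-solved problem $\mathcal{P}_1$ by observing that $P_r$ plays the role of a slack variable that decouples from the beamformer design once the SNR constraint is rewritten as a lower bound on the worst-case inner product. If that reduction goes through, all the structural results obtained for $\mathcal{P}_1$ transfer verbatim, and the explicit formula \eqref{equ:oppr} falls out of balancing the active SNR constraint.

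First, I would fix $\mathbf{b}$ and argue that the rank-one decomposition $\mathbf{W}=\frac{\tilde{P}_r}{\|\mathbf{g}\|_2}\mathbf{w}\mathbf{g}^H$ of Lemma~\ref{lemma:powermin} still applies in the power-minimization setting. The argument mirrors Appendix~A: at the optimum of $\mathcal{P}_2$ the per-antenna constraint must be tight for at least one antenna (otherwise $P_r$ could be reduced), so the MRC pre-processing $\mathbf{g}^H$ at the relay is still optimal because any component of $\mathbf{W}$ outside the span of $\mathbf{g}$ only inflates the per-antenna noise power without improving the useful signal energy after the second hop.

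Second, with the rank-one $\mathbf{W}$ substituted in, the SNR constraint of $\mathcal{P}_2$ simplifies after the same algebraic manipulation that produced \eqref{equ:bylemma_ini} to
\begin{equation}
P_r\min_{\|\Delta\mathbf{H}\|_F\le\varepsilon}\bigl|\mathbf{r}^H(\tilde{\mathbf{H}}_{rd}+\Delta\mathbf{H})\mathbf{w}\bigr|^2\;\ge\;\bar{\gamma}_0,\nonumber
\end{equation}
where $\bar{\gamma}_0$ collects the constants as in \eqref{equ:powermin4}. Since $P_r$ appears only here and in the objective, the constraint is equivalent to $P_r\ge \bar{\gamma}_0/\min_{\Delta}|\mathbf{r}^H(\tilde{\mathbf{H}}_{rd}+\Delta\mathbf{H})\mathbf{w}|^2$, and minimizing $P_r$ is therefore equivalent to maximizing the worst-case denominator over $(\mathbf{w},\mathbf{r})$ with $|w_i|\le 1$ and $\|\mathbf{r}\|_2=1$. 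This is precisely problem \eqref{equ:bylemma_ini2}, so the solution derived for $\mathcal{P}_1$ applies: $\mathbf{r}=\tilde{\mathbf{H}}_{rd}\mathbf{w}^\circ/\|\tilde{\mathbf{H}}_{rd}\mathbf{w}^\circ\|_2$, and $\mathbf{w}^\circ$, $f^\circ$ are the optimizer and optimal value of \eqref{equ:abover}. Setting $P_r\max\{f^\circ,0\}^2=\bar{\gamma}_0$ and expanding $\bar{\gamma}_0$ yields \eqref{equ:oppr}, and back-substitution into the rank-one expression produces the stated $\mathbf{W}$.

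The main subtlety, which I would treat carefully, is the choice of $\mathbf{b}$: both $\tilde{P}_r$ and $\bar{\gamma}_0$ depend on $\|\mathbf{H}_{sr}\mathbf{b}\|_2^2$, so one must verify that the reduction above, which was carried out with $\mathbf{b}$ fixed, still designates $\mathbf{b}=\boldsymbol{\nu}(\mathbf{H}_{sr}^H\mathbf{H}_{sr})$ as optimal after the outer minimization. A short monotonicity check of the map $x\mapsto \gamma_0\sigma_d^2(P_s x+\sigma_r^2)/[(P_s x-\gamma_0\sigma_r^2)\max\{f^\circ,0\}^2]$ for $x=\|\mathbf{H}_{sr}\mathbf{b}\|_2^2$ shows the derivative has a constant sign whenever the denominator is positive (i.e., whenever the SNR target is feasible for the given source), so maximizing $\|\mathbf{H}_{sr}\mathbf{b}\|_2$ is indeed optimal for $\mathcal{P}_2$. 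Finally, I would invoke Proposition~\ref{pro:feasibility} to ensure $f^\circ>0$ so that the closed form for $P_r$ is well-defined; outside this regime $\mathcal{P}_2$ is simply infeasible.
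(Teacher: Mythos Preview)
Your proposal is correct and follows essentially the same route as the paper: reduce $\mathcal{P}_2$ to the already-solved inner problem \eqref{equ:bylemma_ini2} by invoking the rank-one structure of Lemma~\ref{lemma:powermin}, treating $P_r$ as a slack variable, and then reading off $P_r$ from the active SNR constraint. The only difference is that you make explicit the monotonicity check in $\|\mathbf{H}_{sr}\mathbf{b}\|_2^2$ to justify $\mathbf{b}=\boldsymbol{\nu}(\mathbf{H}_{sr}^H\mathbf{H}_{sr})$, whereas the paper leaves this implicit in the equivalence with $\mathcal{P}_1$ and defers the feasibility discussion to Remark~\ref{rem:only}.
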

\begin{remark}\label{rem:only}
According to Theorem~\ref{theorem:powermin}, to ensure that $P_r$ is finite, there must be $f^\circ> 0$ and $\gamma_0< P_s\|\mathbf H_{sr}\mathbf b\|_2^2/\sigma_r^2=P_s\lambda_{\max}(\mathbf H_{sr}^H\mathbf H_{sr})/\sigma_r^2$. Therefore, we claim
that the necessary and sufficient \emph{feasibility condition} of
problem $\mathcal P_2$ is given by constraint \eqref{equ:satisfy} and
$\gamma_0< P_s\lambda_{\max}(\mathbf H_{sr}^H\mathbf H_{sr})/\sigma_r^2$.
Note that the first condition requires that the error bound should not be too large and
the second condition means that the received SNR target at the destination cannot
be larger than the received SNR at the relay.
\end{remark}

\section{Comparison with the Work in \cite{MD_Unicast}}\label{sec:md}
When $\Delta \mathbf H=\mathbf 0$,  problem $\mathcal P_1$ and problem $\mathcal P_2$ were also discussed in \cite{MD_Unicast}.
The authors in \cite{MD_Unicast} used an alternating optimization approach for the joint design of $(\mathbf W, \mathbf b, \mathbf r)$. By fixing $\mathbf b$ and $\mathbf r$, they obtained the optimal $\mathbf W$ by solving an SDP problem, with complexity given by $\mathcal O(N^6)$ times the number of iterations (typically lies between $5$ and $50$ \cite{boyd_sdp}). While fixing $\mathbf W$, the optimal $\mathbf b$ and $\mathbf r$ were given as the closed-form solution.  This process is repeated until convergence.
In contrast, in our work, we first determine
the optimal $\mathbf b$ as the closed-form solution and the optimal $\mathbf W$ as a function of $\mathbf w$, and then propose Algorithm~\ref{alg:sub} for the joint solution of $(\mathbf w, \mathbf r)$, with complexity given by $\mathcal O(N\log_2N)$ times the iteration number (as shown in Fig.~\ref{fig:Number_of_iteration}). Once $\mathbf w$ is obtained, $\mathbf W$ can be directly determined by Lemma~\ref{lemma:powermin}. Obviously, our work has much lower complexity and is much easier to implement since only the arithmetic operation rather than the advanced software
package such as CVX is required. The average CPU time comparison is further illustrated in Fig.~\ref{fig:time}.

We will next show that our method and that in \cite{MD_Unicast} have the same performance.
Consider  the $\kappa$th iteration in \cite{MD_Unicast},  by  fixing $\mathbf b^{(\kappa)}$ and $\mathbf r^{(\kappa)}$, the optimal $\mathbf W^{(\kappa)}$ is obtained by the numerical results, which can be equivalently written as the closed-form expression in \eqref{equ:w2}. That is,
\begin{multline}
\mathbf W^{(\kappa)}=\sqrt{\frac{P_r}{\|\mathbf H_{sr}\mathbf b^{(\kappa)}\|_2^2(P_s\|\mathbf H_{sr}\mathbf b^{(\kappa)}\|_2^2+\sigma_r^2)}}\times\nonumber\\
\mathbf p(\mathbf r^{(\kappa)})\mathbf b^{(\kappa)H}\mathbf H_{sr}^H. \nonumber
\end{multline}
Then the next step is to update $\mathbf b^{(\kappa+1)}$ and $\mathbf r^{(\kappa+1)}$. Substituting $\mathbf W^{(\kappa)}$ into problem $\mathcal P_1$, it results in\footnote{We don't impose the relay per-antenna power constraints here when optimizing $\mathbf b$ and $\mathbf r$ for two reasons: First, the source and destination only have to maximize the received SNR from their own perspectives; more importantly, it can be proven that it results in $\mathbf b^{(\kappa+1)}=\mathbf b^{(\kappa)}$ by considering the power constraint.}
\begin{small}
\ben\label{equ:jingda}
&&\max_{\|\mathbf b\|_2=1, \|\mathbf r\|_2=1}\nonumber\\
&&\frac{P_rP_s\|\mathbf b^{(\kappa)H}\mathbf H_{sr}^H\mathbf H_{sr}\mathbf b\|^2_2|\mathbf r^{H}\tilde {\mathbf H}_{rd}\mathbf p(\mathbf r^{(\kappa)})|^2}{P_r\sigma_r^2|\mathbf r^{H}\tilde {\mathbf H}_{rd}\mathbf p(\mathbf r^{(\kappa)})|^2+\sigma_d^2\|\mathbf H_{sr}\mathbf b^{(\kappa)}\|_2^2(P_s\|\mathbf H_{sr}\mathbf b^{(\kappa)}\|_2^2+\sigma_r^2)}.
\enn
\end{small}
By \eqref{equ:jingda}, to optimize $\mathbf b$, one must maximize
$\|\mathbf b^{(\kappa)H}\mathbf H_{sr}^H\mathbf H_{sr}\mathbf b\|^2_2$. Thus $\mathbf b^{(\kappa+1)}$ is updated as
 \ben\label{equ:maths}
\mathbf b^{(\kappa+1)}=\mathbf H_{sr}^H\mathbf H_{sr}\mathbf b^{(\kappa)}/\|\mathbf H_{sr}^H\mathbf H_{sr}\mathbf b^{(\kappa)}\|_2.
\enn
which achieves the optimal value as $\|\mathbf b^{(\kappa)H}\mathbf H_{sr}^H\mathbf H_{sr}\|_2^2$.
On the other hand, to optimize $\mathbf r$, one must maximize $\big|\mathbf r^{H}\tilde {\mathbf H}_{rd}\mathbf p(\mathbf r^{(\kappa)})\big|$. Update
\ben
\mathbf r^{(\kappa+1)}=\frac{\tilde {\mathbf H}_{rd}\mathbf p(\mathbf r^{(\kappa)})}{\|\tilde {\mathbf H}_{rd}\mathbf p(\mathbf r^{(\kappa)})\|_2}. \label{equ:comt2}
 \enn
After that, repeat the above process in the next iteration. In summary, for each iteration, one only has to compute \eqref{equ:maths} and \eqref{equ:comt2}.

The iteration process will stop if is satisfies that
\begin{small}
\ben
\left|\|\mathbf b^{(\kappa+1)H}\mathbf H_{sr}^H\mathbf H_{sr}\|_2^2-\|\mathbf b^{(\kappa)H}\mathbf H_{sr}^H\mathbf H_{sr}\|_2^2\right|&\leq &\epsilon,\label{equ:pro1}\\
\left||\mathbf r^{(\kappa+1)H}\tilde {\mathbf H}_{rd}\mathbf p(\mathbf r^{(\kappa)})|-|\mathbf r^{(\kappa)H}\tilde {\mathbf H}_{rd}\mathbf p(\mathbf r^{(\kappa-1)})|\right|&\leq& \epsilon.\label{equ:pro2}
\enn
\end{small}
where $\epsilon$ is the solution accuracy.

From \eqref{equ:pro1}, it is easy to know that the optimal $\mathbf b$ is given by $\arg \max_{\|\mathbf b\|_2=1}{\|\mathbf b^{^H}\mathbf H_{sr}^H\mathbf H_{sr}\|^2}$, i.e., the principal eigenvector of $\mathbf H_{sr}^H\mathbf H_{sr}$.
Meanwhile, \eqref{equ:pro2} is equivalent to the stopping criterion in Algorithm~\ref{alg:sub} when $\varepsilon=0$. Then one can see that the method in \cite{MD_Unicast} and that in our work have the same performance.

By the relationship between the SNR maximization problem $\mathcal P_1$ and
power minimization problem $\mathcal P_2$, it can be verified in the similar way that for problem $\mathcal P_2$, the method
in \cite{MD_Unicast} has the same performance as that of our method.

\section{SIMULATIONS RESULTS}\label{sec:simu}
In this section, we provide numerical results to validate the proposed robust design  in this paper. First, the convergence behavior and the required CPU time of Algorithm~\ref{alg:sub} is illustrated. Then the performance evaluation of our robust design is addressed. The parameters are set as $M_s=M_d=N$.
The channel fading is modeled as Rayleigh fading, with each entries of $\mathbf H_{sr}$ and $\tilde {\mathbf H}_{rd}$  satisfying $\mathcal C\mathcal N(0,1)$, and the noise variance parameter is set as $\sigma_d^2=\sigma_r^2=1$. We set the power consumed at the source as $20$dBW and the given SNR target $\gamma_0$ as $15$dB. By Proposition~\ref{pro:feasibility}, to make sure that each transmission is valid, the maximum value of $\varepsilon$ can not exceed $\sqrt{\lambda_{\max}(\tilde  {\mathbf H}_{rd}\tilde {\mathbf H}_{rd}^H)}$. Thereby we vary $\varepsilon$ through the normalized parameter $\rho$, i.e., $\varepsilon^2=\rho\lambda_{\max}(\tilde  {\mathbf H}_{rd}\tilde {\mathbf H}_{rd}^H)$  with $\rho\in [0, 1)$. Then the larger the $\rho$ is, the poorer the CSI quality will be.  All results are averaged over $1000$ channel realizations.

\begin{figure}[!thbp]
    \centering
    \includegraphics[width=3.5in]{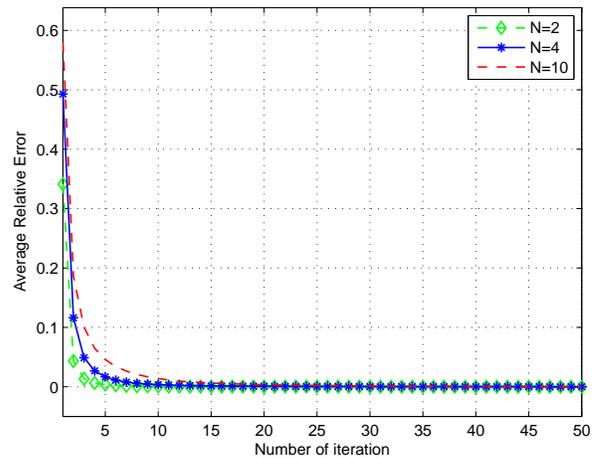}
    \caption{Relative error to the optimal value of Algorithm~\ref{alg:sub} versus the number of iterations. }\label{fig:Number_of_iteration}
\end{figure}

\begin{figure}[!thbp]
    \centering
    \includegraphics[width=3.5in]{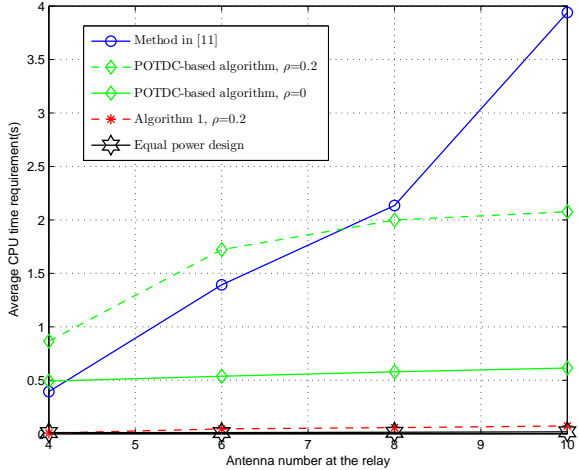}
    \caption{Average  CPU time comparison versus different $N$.}\label{fig:time}
\end{figure}

First, we study the convergence performance of Algorithm~\ref{alg:sub}. We set $N\in\{2, 4, 10\}$. Fig.~\ref{fig:Number_of_iteration} shows the average iteration numbers to achieve some certain accuracies, which is defined as $(f_{AO}-\Phi(\mathbf w^{(\kappa)}, \mathbf r^{(\kappa)}))/f_{AO}$.  It can be observed that Algorithm~\ref{alg:sub} converges quickly to the optimal value in about $5$ to $20$ iterations. Remembering that the complexity of Algorithm~\ref{alg:sub} is only $\mathcal O(N\log_2N)$ in each iteration, we can claim that our proposed algorithm has quite low complexity.
\begin{figure}[!t]
    \centering
    \includegraphics[width=3.5in]{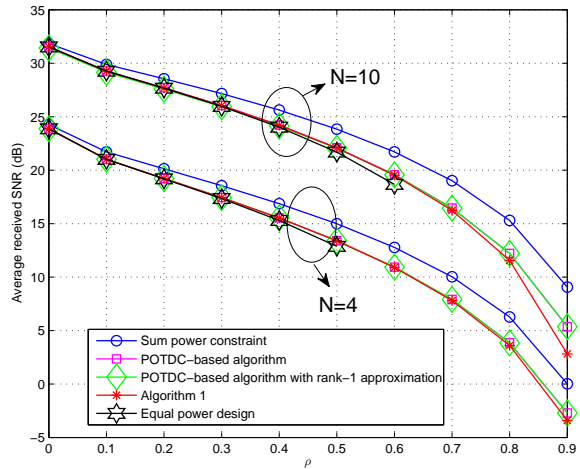}
    \caption{Average worst case received SNR at the relay versus $\rho$}\label{fig:snr_var_eps}
    \end{figure}

By Theorem~\ref{theorem:snrmax}, we have expressed the optimal BF design as a function of $\mathbf w$.
Hence in Fig.~\ref{fig:time}, we provide the CPU time comparison for the computation of $\mathbf w$ by different methods: the POTDC-based algorithm in section~\ref{subsec:opWbr}, Algorithm~\ref{alg:sub} and the equal power design in remark~\ref{rem:opw}. We also plot the time cost of the method in \cite{MD_Unicast} as a benchmark. Notice that equal power design is just the special case of our proposed method when $\rho=0$.  It can be observed from Fig.~\ref{fig:time} that the equal power design and our proposed robust method take up nearly the same time, which are much smaller than that in other methods.    This is reasonable since Algorithm~\ref{alg:sub} adopts the analytical (or closed-form) solution in each iteration, while the other two methods only obtain the numerical results by solving the SDP problem.

We now presents some performance evaluation results to compare our robust BF design with other schemes:
  a) \emph{Sum power constraint}: This is the robust design under
   sum power constraint \cite{BK_grass, HS_Worst}.
      b) \emph{POTDC-based algorithm}: This method solves problem~\eqref{equ:relax2} by the POTDC-based algorithm, and computes the corresponding received SNR by Theorem~\ref{theorem:snrmax}.
       c) \emph{POTDC-based algorithm with rank-1 approximation}:  This method computes the
    received SNR by replacing $f^\circ$ with $f_{RA}$.
c) \emph{Equal power design}: This method is also the nonrobust design, since it does not take the channel uncertainty error into account.
The performance of the method in \cite{MD_Unicast} is not illustrated here, as we have theoretically proved  in section~\ref{sec:md} that it has the same performance as the \emph{Equal power design}.
\begin{figure}[!thbp]
    \centering
    \includegraphics[width=3.5in]{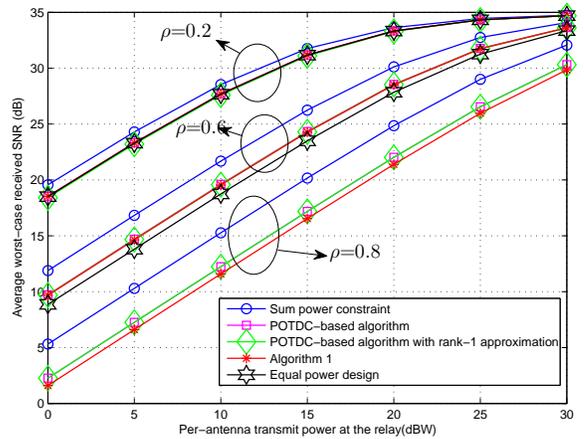}
    \caption{Average worst case received SNR versus transmit power at each antenna of the relay when $N=10$.}\label{fig:snr_vs_power}
\end{figure}

In Fig.~\ref{fig:snr_var_eps}, we address the relationship between the average worst-case received SNR and $\rho$.   It can  be seen that the system performance is deteriorated by the uncertainty error.   The larger the $\rho$ is, the smaller SNR value will be. Another observation is that the design under sum power constraints results in higher SNR than that in per-antenna power constraints, which is due to the flexibility of power distribution among antennas. Furthermore, for small to moderate $\rho$, equal power design has almost the same performance as the POTDC-based algorithm. Remembering that the equal power design quite simplifies the engineering designs, we claim that it is a competitive alternative for a simple relay node (inexpensive power amplifier). When $\rho$ is  large, the equal power design would lead to invalid transmission, while our proposed robust method still behaves well and preserves optimality to some extent.

In Fig.~\ref{fig:snr_vs_power}, we further plot the average worst-case received SNR versus the per-antenna power at the relay for $N=10$. Simulations reveal that when $\rho=0.2$, both the proposed robust method and the equal power design behaves as well as the POTDC-based algorithm.  This gives us a hint that when the quality of the CSI is not too bad, equal power design is most simple way to achieve a pretty good performance.  On the other hand, when $\rho$ increases to $0.8$, i.e., the CSI equality is very poor, equal power design could not be applied any more, while Algorithm~\ref{alg:sub} has a small performance gap to that of the POTDC-based algorithm. Considering the fact that the POTDC-based algorithm has much higher time cost than that of the two formers, it can be concluded that Equal power design (for small to moderate $\rho$) and Algorithm~\ref{alg:sub} (for large $\rho$) are more efficient for practical use.

\begin{figure}[!thbp]
    \centering
    \includegraphics[width=3.5in]{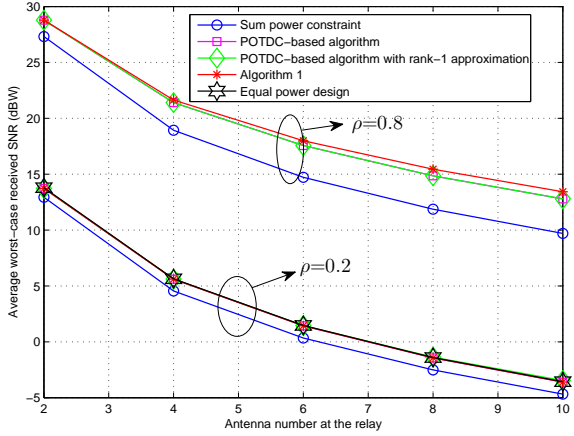}
    \caption{Average per antenna power at the relay versus rely antenna number $N$.}\label{fig:Pr_var_N_per}
    \end{figure}
Fig.~\ref{fig:Pr_var_N_per} shows the average required per-antenna power with given $\gamma_0$ for different values of $N$. We can see that when $\rho$ becomes larger,
the required power usage at each antenna increases. This phenomenon reveals that  when the channel CSI becomes poorer, the relay needs more power to guarantee the SNR target in the worst case, which is consistent with the intuition.  The conclusion is similar from that we obtained in Fig.~\ref{fig:snr_var_eps} and Fig.~\ref{fig:snr_vs_power}, which is due to the fact that the power minimization problem can be converted into the SNR maximization problem.

\section{Conclusion}
In this paper, we consider an AF multi-antenna relay network with one source and one destination. Assuming that the relay only has imperfect CSI, we have derived a semi-closed-form solution for the joint optimal BF design. Then we propose a low-complexity iterative algorithm for obtaining the remaining unknown variable. We also indicate  that both the SNR maximization problem and the power minimization problem can be ascribed into the same problem.  Compared to the existing methods, our solution leads to drastic complexity reduction for solving joint source-relay-destination BF design.

\appendices
\section*{Appendix A: Proof of Lemma~\ref{lemma:powermin}}\label{app:powermin}
Suppose that the singular value decomposition (SVD) of $\mathbf g$ is
\ben
\mathbf g=\mathbf U\begin{bmatrix}\|\mathbf g\|_2\\ \mathbf 0_{N-1}\end{bmatrix}\triangleq\mathbf U{\bm \Sigma},\label{equ:ui2}
\enn
where  the unitary matrix $\mathbf U\in \mathbb C^{N\times N}$. Then we can express  the relay BF matrix as
\ben
\mathbf W=\mathbf Y\mathbf U^H\label{equ:bi2},
\enn
where $\mathbf Y\in \mathbb C^{N\times N}$ is a matrix to be determined. Upon substituting \eqref{equ:bi2} and \eqref{equ:ui2} into problem \eqref{equ:powermin2}, we have
\ben\label{equ:sub_bi}
\max _{\mathbf Y, \mathbf r}  &&\min_{\|\Delta \mathbf H\|_F\leq \varepsilon}\frac{P_s\|\mathbf r^H(\tilde {\mathbf H}_{rd}+\Delta \mathbf H)\mathbf Y\mathbf {\Sigma}\|_2^2}{\sigma_r^2\|\mathbf r^H(\tilde {\mathbf H}_{rd}+\Delta \mathbf H)\mathbf Y\|_2^2+\sigma_d^2}\nonumber\\
\text{s.t.} &&[\mathbf Y(P_s\mathbf {\Sigma}\mathbf {\Sigma}^H+\sigma_r^2)\mathbf Y^H]_{i,i}\leq P_r.
\enn
We can further partition $\mathbf Y$ as
\ben\label{equ:further}
\mathbf Y=\begin{bmatrix}\bar{\mathbf w} &&\mathbf Z_{y}\end{bmatrix},
\enn
where $\bar{\mathbf w}\triangleq [\bar w_1, \cdots \bar w_N]^T$  and $\mathbf Z_{y}\in \mathbb C^{N\times (N-1)}$.
Then we have
\ben
\mathbf Y\mathbf {\Sigma}&=&\begin{bmatrix}\bar{\mathbf w} &&\mathbf Z_{y}\end{bmatrix}\begin{bmatrix}\|\mathbf g\|_2\\ \mathbf 0\end{bmatrix}=\|\mathbf g\|_2\bar{\mathbf w}.\label{equ:22}
\enn
Upon substituting \eqref{equ:further} and \eqref{equ:22} into \eqref{equ:sub_bi}, we have the received SNR at the destination as
\ben
&&\text{SNR}=\frac{P_s\|\mathbf g\|_2^2|\mathbf r^H(\tilde {\mathbf H}_{rd}+\Delta \mathbf H)\bar{\mathbf w}|^2}{\sigma_r^2\|\mathbf r^H(\tilde {\mathbf H}_{rd}+\Delta \mathbf H)\mathbf Y\|_2^2+\sigma_d^2},\nonumber\\
&&=\frac{P_s\|\mathbf g\|_2^2|\mathbf r^H(\tilde {\mathbf H}_{rd}+\Delta \mathbf H)\bar{\mathbf w}|^2}{\sigma_r^2(|\mathbf r^H(\tilde {\mathbf H}_{rd}+\Delta \mathbf H)\bar{\mathbf w}|_2^2+\|\mathbf r^H(\tilde {\mathbf H}_{rd}+\Delta \mathbf H)\mathbf Z_{y}\|_2^2)+\sigma_d^2}\label{equ:indp},\nonumber
\enn
and the per-antenna power becomes
\ben
&&[\mathbf Y[P_s\mathbf {\Sigma}\mathbf {\Sigma}^H+\sigma_r^2]\mathbf Y^H]_{i,i}\nonumber\\
&=& [\bar{\mathbf w}(P_s\|\mathbf g\|_2^2+\sigma_r^2)\bar{\mathbf w}^H]_{i,i}
+\sigma_r^2[\mathbf Z_{y}\mathbf Z_{y}^H]_{i,i}\nonumber\\
&=& (P_s\|\mathbf g\|_2^2+\sigma_r^2)|\bar w_i|^2
+\sigma_r^2[\mathbf Z_{y}\mathbf Z_{y}^H]_{i,i}.\nonumber
\enn
Then problem \eqref{equ:sub_bi} becomes
\ben
&&\max_{\bar{\mathbf w}, \mathbf r}\min_{\|\Delta \mathbf H\|_F\leq \varepsilon}\nonumber\\&&\frac{P_s\|\mathbf g\|_2^2|\mathbf r^H(\tilde {\mathbf H}_{rd}+\Delta \mathbf H)\bar{\mathbf w}|^2}{\sigma_r^2(|\mathbf r^H(\tilde {\mathbf H}_{rd}+\Delta \mathbf H)\bar{\mathbf w}|_2^2+\|\mathbf r^H(\tilde {\mathbf H}_{rd}+\Delta \mathbf H)\mathbf Z_{y}\|_2^2)+\sigma_d^2},\nonumber\\
&&\text{s.t.}(P_s\|\mathbf g\|_2^2+\sigma_r^2)|\bar w_i|^2
+\sigma_r^2[\mathbf Z_{y}\mathbf Z_{y}^H]_{i,i}\leq P_r.\label{equ:change}
\enn

For any feasible $\mathbf Y=\begin{bmatrix}\bar{\mathbf w} &&\mathbf Z_y\end{bmatrix}$ with $\mathbf Z_y\neq \mathbf 0$, one can always find $\mathbf Y'=\begin{bmatrix}\bar{\mathbf w} &&\mathbf 0\end{bmatrix}$ which can achieve a larger SNR. Thus we conclude that there must be $\mathbf Z_y=\mathbf 0$. Denote
$\mathbf w=\frac{1}{\tilde P_r}\bar{\mathbf w}$.
We can express problem~\eqref{equ:change} as
\ben
\max_{\mathbf w, \mathbf r}\min_{\|\Delta \mathbf H\|_F\leq \varepsilon}&&\frac{\tilde P_r^2P_s\|\mathbf g\|_2^2|\mathbf r^H(\tilde {\mathbf H}_{rd}+\Delta \mathbf H)\mathbf w|^2}{\tilde P_r^2\sigma_r^2(|\mathbf r^H(\tilde {\mathbf H}_{rd}+\Delta \mathbf H)\mathbf w|_2^2+\sigma_d^2},\nonumber\\
\text{s.t.}&&|w_i|\leq 1.\label{equ:change2}
\enn
Then we can express $\mathbf W$ as
\ben\label{equ:exb}
\mathbf W=\tilde P_r\mathbf w(\mathbf U)_1^H=\frac{\tilde P_r}{\|\mathbf g \|_2}\mathbf w\mathbf g^H,\nonumber
\enn
where $(\mathbf U)_1$ denotes the first column of $\mathbf U$. The proof is completed.

\section*{Appendix B: Proof of Proposition~\ref{pro:feasibility}}\label{app:feasibility}
In this appendix, we will first show that $\Omega(k^\circ)$ is a decreasing function with respect to $\varepsilon$. Then we will give necessary and sufficient conditions for $\Omega(k^\circ)>0$.
To investigate the dynamic change of $\Omega(k^\circ)$ in terms of $\varepsilon$, we rewrite $k^\circ$ as $k^\circ=k^\circ(\varepsilon)$, making the dependence of $k^\circ$ on $\varepsilon$ explicitly.
For $\varepsilon^2>\chi(N)$, we have $k^\circ=N$, resulting in $\Omega(k^\circ)=0$. While for $\varepsilon=0$, it is easy to show that $\Omega(k^\circ)>0$.  Then for any nonzero $\varepsilon$, the feasible region of $\varepsilon$ is given by  $(0, \sqrt{\chi(N)}]$, which can be decomposed into $N$ subregions
\ben
(0, \sqrt{\chi(N)}]=(0, \sqrt{\chi(1)}]\cup\cdots \cup(\sqrt{N-1}, \sqrt{\chi(N)}].\nonumber
\enn
Denote $\Gamma_t\triangleq (\sqrt{t}, \sqrt{t+1}]$, for $0\leq t\leq N-1$. For some $0\leq t_1\leq N-1$, consider some $\varepsilon_1\in \Gamma_{t_1}$.  Then for any $\varepsilon_2> \varepsilon_1$, if $\varepsilon_2\in \Gamma_{t_1}$, we have $k^\circ(\varepsilon_1)=k^\circ(\varepsilon_1)=t_1$. By \eqref{equ:gk}, it can be easily verified that $\Omega(k^\circ(\varepsilon_1))>\Omega(k^\circ(\varepsilon_2))$.

Now suppose that  $\varepsilon_2\in \Gamma_{t_1+1}$, for $0\leq t_1+1\leq N-1$, or i.e., $0\leq t_1\leq N-2$. Then $k^\circ(\varepsilon_2)=t_1+1$. In this case, we have
\ben
&&\Omega(k^\circ(\varepsilon_1))-\Omega(k^\circ(\varepsilon_2))\nonumber\\
&=&\alpha_{\pi(t_1+1)}-\sqrt{(N-t_1)(\varepsilon_1^2-\sum_{i\in \mathcal S(t_1)}\alpha_{i}^2)}\nonumber\\
&&\quad \quad+\sqrt{(N-t_1-1)(\varepsilon_2^2-\sum_{i\in \mathcal S(t_1+1)}\alpha_{i}^2)}.\nonumber
\enn
We aim to show that $\Omega(k^\circ(\varepsilon_1))-\Omega(k^\circ(\varepsilon_2))>0$, which  is equivalent to
\ben\label{equ:both}
&&\alpha_{\pi(t_1+1)}+\sqrt{(N-t_1-1)(\varepsilon_2^2-\sum_{i\in \mathcal S(t_1+1)}\alpha_{i}^2)}\nonumber\\
&>& \sqrt{(N-t_1)(\varepsilon_1^2-\sum_{i\in \mathcal S(t_1)}\alpha_{i}^2)}.
\enn
Square \eqref{equ:both} on both sides, after some manipulations, we get
\ben
&&(N-t_1)(\varepsilon_2^2-\varepsilon_1^2)\nonumber\\
&-&\left(\sqrt{\varepsilon_2^2-\sum_{i\in \mathcal S(t_1+1)}\alpha_{i}^2}-\alpha_{\pi(t_1+1)}\sqrt{N-t_1-1}\right)^2> 0,
\nonumber
\enn
or equivalently
\ben
(N-t_1)(\varepsilon_2^2-\varepsilon_1^2)>\left(\sqrt{m}-\sqrt{n}\right)^2,\label{equ:byby}
\enn
where we have defined $m\triangleq \varepsilon_2^2-\sum_{i\in \mathcal S(t_1+1)}\alpha_i^2$ and $n\triangleq (N-t_1-1)\alpha_{\pi(t_1+1)}^2$. Note that $m>n$ since $\varepsilon_2^2> \chi(t_1+1)$.
By the assumption $\varepsilon_1\in \Gamma_{t_1}$, we have $\chi(t_1)<\varepsilon_1^2\leq \chi(t_1+1)$, or i.e.,
\ben
\varepsilon_1^2&\leq &\sum_{i\in \mathcal S(t_1+1)}\alpha_i^2+(N-t_1-1)\alpha_{\pi(t_1+1)}^2\nonumber\\
&=&\varepsilon_2^2-(m-n).\label{equ:51}
\enn

By \eqref{equ:51}, if we can  show that the following inequality holds, then \eqref{equ:byby} is proved.
\ben
(N-t_1)(m-n)> \left(\sqrt{m}-\sqrt{n}\right)^2,\label{equ:byby2}
\enn
or equivalently,
\ben
(N-t_1)(\sqrt{m}+\sqrt{n})\geq (\sqrt{m}-\sqrt{n}).\label{equ:e53}
\enn
Notice that \eqref{equ:e53} always holds for $t_1\leq N-2$. Following the similar lines, it can be verified that for any $\varepsilon_{i+1}\in \Gamma_{t_1+i}$, we have \ben
\Omega(k^\circ(\varepsilon_{i+1}))<\cdots<\Omega(k^\circ(\varepsilon_2))< \Omega(k^\circ(\varepsilon_1)).\label{equ:conclude}
\enn
From \eqref{equ:conclude}, it can be concluded that $\Omega(k^\circ(\varepsilon))$ is a decreasing function with respect to $\varepsilon$. Hence the minimal value of $\Omega(k^\circ(\varepsilon))$ is given
\ben
\Omega(k^\circ(\sqrt{\chi(N)}))
=\alpha_{\pi(N)}-\sqrt{\sum_{i=1}^N\alpha_i^2-\sum_{i\in \mathcal S(N-1)}\alpha_i^2}=0.\nonumber
\enn
Hence for $\varepsilon^2<\chi(N)$, we have $\Omega(k^\circ(\varepsilon))>\Omega(k^\circ(\sqrt{\chi(N)}))=0$, and vice verse.

Notice that $\chi(N)=\sum_{i=1}^N|\tilde {\mathbf h}_i^H\mathbf r|^2$. It is obvious that different $\mathbf r$ leads to different value of $\chi(N)$.
By the equality
\ben
\max_{\|\mathbf r\|_2=1}\sum_{i=1}^N|\tilde {\mathbf h}_i^H\mathbf r|^2=\max_{\|\mathbf r\|_2=1}\mathbf r^H\tilde {\mathbf H}_{rd}\tilde {\mathbf H}_{rd}^H\mathbf r=\lambda_{\max}(\tilde {\mathbf H}_{rd}\tilde {\mathbf H}_{rd}^H),\nonumber
\enn
there always exists some $\mathbf r$, such that $\chi(N)=\sum_{i=1}^N|\tilde {\mathbf h}_i^H\mathbf r|^2\leq\lambda_{\max}(\tilde {\mathbf H}_{rd}\tilde {\mathbf H}_{rd}^H)$. Hence $\varepsilon^2< \chi(N)\leq \lambda_{\max}(\tilde {\mathbf H}_{rd}\tilde {\mathbf H}_{rd}^H)$ is the necessary and sufficient condition for $\Omega(k^\circ)>0$.
The proof is completed.
\end{document}